\newcommand\GM[2][r]{\ifx t#1 \textcolor{green!50!black}{[\textbf{GM:} #2]}
  \else \begin{center}\textcolor{green!50!black}{\textbf{GM:} #2} \end{center} \fi}
\newcommand\MS[2][r]{\ifx t#1 \textcolor{blue}{[\textbf{MS:} #2]}
  \else \begin{center}\textcolor{blue}{\textbf{MS:} #2} \end{center} \fi}
\begin{document}

\newtheorem{theorem}{Theorem}
\newtheorem{corollary}[theorem]{Corollary}
\newtheorem{lemma}[theorem]{Lemma}
\newproof{proof}{Proof}
\newdefinition{definition}{Definition}
\newdefinition{example}{Example}

\newcolumntype{x}[1]{>{\centering\arraybackslash\hspace{0pt}}p{#1}}

\title{Automated Amortised Resource Analysis for Term Rewrite Systems%
}

\author[1]{Georg Moser} 
\ead{georg.moser@uibk.ac.at}
\author[2]{Manuel Schneckenreither} 
\ead{manuel.schneckenreither@uibk.ac.at}


\address[1]{
  Department of Computer Science,
  University of Innsbruck, Austria,
}
\address[2]{
  Department of Information Systems,
  Production and Logistics Management,
  University of Innsbruck, Austria,
}

\date{\today}

\begin{abstract}
In this paper we establish an automated amortised resource analysis for
term rewrite systems. The method is presented in an annotated type system
and gives rise to polynomial bounds on the innermost
runtime complexity of the analysed term rewrite system.
Our analysis does not restrict the input rewrite system in any way so that rewrite systems may serve
as abstractions of first-order, eagerly evaluated functional programs over user-defined inductive
data-types. This facilitates integration in a general framework for resource analysis of programs.
In particular, we have implemented the method and integrated it into our analysis tool \tct.
Furthermore, we have coupled the established analysis with a complexity reflecting transformation
from pure \OCaml\ programs. This extends the provided analysis to a fully automated resource
analysis of higher-order functional programs.
\end{abstract}

\begin{keyword}
  analysis of algorithms \sep
  amortised complexity \sep
  functional programming \sep
  types \sep
  automation
\end{keyword}

\maketitle

\section{Introduction}

Amortised resource analysis~\cite{ST:1983,Tarjan:1985} is a powerful method to assess the
overall complexity of a sequence of operations precisely. It has been established
by Sleator and Tarjan in the context of self-balancing data structures, which sometimes
require costly operations that however balance out in the long run.

For automated resource analysis, amortised cost analysis has been in particular
pioneered by Hoffmann et al., whose \raml\ prototype has grown into a highly
sophisticated analysis tool for higher-order functional programs, cf.~\cite{HDW:2017}.
In a similar spirit, resource analysis tools for imperative programs like \costa~\cite{AAGP:2011}, \cofloco~\cite{Flores-Montoya:2016} and
\loopus~\cite{SZV:2016} have integrated amortised reasoning.
In this paper, we establish a novel automated amortised resource analysis for
term rewrite systems (TRSs for short).

Consider the rewrite system $\RSa$ in Figure~\ref{fig:1}
encoding a variant of an example by Okasaki~\cite[Section~5.2]{Okasaki:1999}
(see also\cite[Example~1]{HM:2014});
$\RSa$ encodes an efficient implementation of a queue in
functional programming. A queue is represented as a pair
of two lists $\queue(f,r)$, encoding the initial part~$f$
and the reversal of the remainder~$r$. The invariant of the algorithm is
that the first list never becomes empty, which is achieved by
reversing $r$ if necessary. Should the invariant ever be violated,
an exception ($\errorHead$ or $\errorTail$) is raised.
To exemplify the physicist's method of amortised analysis~\cite{Tarjan:1985} we
assign to every queue $\queue(f,r)$ the length of $r$ as \emph{potential}. Then the
amortised cost for each operation is constant, as the costly reversal operation is
only executed if the potential can pay for the operation.
Thus, based on an amortised analysis, we may deduce the optimal linear runtime
complexity for $\RS$.

\begin{figure}[t]
\centering
\begin{alignat*}{4}
1\colon && \checkF(\queue(\nil,r)) &\to \queue(\rev(r),\nil)
&
\hspace{2ex}
7 \colon && \enq(\zero) &\to \queue(\nil,\nil)
\\
2\colon &\!& \checkF(\queue(x \cons xs,r)) &\to \queue(x \cons xs,r)
&
\hspace{2ex}
8\colon && \revp(\nil,ys) & \to ys
\\
3\colon && \tail(\queue(x \cons f,r)) &\to \checkF(\queue(f,r))
&
\hspace{2ex}
9 \colon && \rev(xs) &\to  \revp(xs,\nil)
\\
4\colon && \snoc(\queue(f,r),x) &\to \checkF(\queue(f,x \cons r))
&
\hspace{2ex}
10\colon &\!& \head(\queue(x \cons f,r)) &\to x
\\
5\colon && \revp(x \cons xs,ys) &\to \revp(xs,x \cons ys)
&
\hspace{2ex}
11\colon && \head(\queue(\nil,r)) &\to \errorHead
\\
6 \colon && \enq(\mS(n)) &\to \snoc(\enq(n),n)
&
\hspace{2ex}
12\colon && \tail(\queue(\nil,r)) &\to \errorTail
\end{alignat*}
\caption{Queues in Rewriting.}
\label{fig:1}
\end{figure}

Taking inspirations from~\cite{HM:2014,HM:2015}, the amortised analysis is based on
the potential method, as exemplified above. It employs the standard (small-step) semantics of innermost
rewriting and exploits a \emph{footprint} relation in order to facilitate the
extension to TRSs. For the latter, we suit a corresponding notion of Avanzini et
al.~\cite{AD:2017} to our context.
Due to the small-step semantics we immediately obtain an analysis which does
not presuppose termination. The incorporation of the footprint relations allows the
immediate adaption of the proposed method to general rule-based languages. The most
significant extension, however, is the extension to standard TRSs.
TRSs form a universal model of computation that underlies much of declarative
programming. In the context of functional programming, TRSs form a natural abstraction
of strictly typed programming languages like~\raml, but natively form foundations
of non-strict languages and non-typed languages as well.

Our interest in an amortised analysis for TRSs is motivated by the use of TRSs as abstract program
representation within our uniform resource analyse tool~\tct~\cite{avanzini2016tct}.
Rather than studying amortised cost analysis for a particular programming language, as provided
for example by Hoffmann et al.~in~\cite{HDW:2017}, we follow the general approach of static program analysis,
where peculiarities of specific programming languages are suitable abstracted to give way to more general constructions
like \emph{recurrence relations}, \emph{cost relations}, \emph{transition systems},
\emph{term rewrite systems}, etc.

Thus we first seek a more general
discussion of amortised analysis in (first-order) TRSs.
In this spirit we
aim at an amortised resource for TRSs in its standard form: neither typed, not necessarily
left-linear, confluent, nor constructor-based.
Secondly, we extend the established
resource analysis through a transformational approach, based on earlier work by Avanzini et al.~\cite{ADM15,ADM:2018}.
This provides us with a state-of-the-art tool for the resource analysis of
pure \OCaml~programs, but more generally allows the analysis of general programs.
Technically, the main contributions of the paper are as follows.
\begin{itemize}
\item Employing the standard rewriting semantics in the context of amortised resource
  analysis.
  This standardises the results and simplifies the presentations
  contrasted to related results on amortised analysis of TRSs cf.~\cite{HM:2014,MS:2017b}.
  We emphasise that our analysis does not presuppose termination.
\item We overcome earlier restrictions to typed, completely defined, orthogonal and constructor
  TRSs. Thus, we establish an amortised analysis for standard first-order rewrite system, that is,
  the only restrictions required are the standard restrictions that (i) the left-hand side of a rule
  must not be a variable and (ii) no extra variables are introduced in rules.
\item The analysis is lifted to relative rewriting, that is, the runtime complexity
  of a relative TRS $\relative{\RS}{\RSS}$ is measured by the number of rule
  applications from $\RS$, only.
  This extension is mainly of practical relevance,
  as required to obtain an automation of significant strength.
\item The analysis has been implemented and integrated into \tct. We
  have assessed the viability of the method in context of the \emph{Termination Problem Database} (\emph{TPDB})%
  \footnote{See~\url{http://termination-portal.org/wiki/TPDB}.}
  as well as on an independent benchmark of TRSs.
\item Finally, the aforementioned analysis has been coupled with the complexity reflecting $\hoca$ transformation~\cite{ADM15,ADM:2018} to
  provide a fully automated resource analysis of pure \OCaml. This is based on a refined transformation taking
  types into account to improve the applicabilities of heuristics in the amortised analysis. We provide experimental
  evidence of the viability of the analysis.
\end{itemize}

This paper is structured as follows. In the next section we provide a high-level introduction into the method, exemplifying its use in the analysis of $B^{\star}$-trees. We believe this example to be of independent interest as it constitutes a challenge to other approaches in automated amortised analysis. In Section~\ref{Preliminaries}, we cover
basics. In Section~\ref{Resources}, we introduce the inference system
and prove soundness of the method. In Section~\ref{Implementation}, we detail the implementation of
the method and remark on challenges posed by automation. Section~\ref{Experiments} provides the
experimental assessment of the method. Finally, we conclude in Section~\ref{Conclusion}, where we
also sketch future work.

The paper is a revised and extended version of the original presentation of the established method
in~\cite{MS:2018}. Apart from the usual correction of errors in the conference version, we have incorporated
full formal details and additional examples. Furthermore, the extension to a resource analysis
of higher-order function programs is new, which also triggered a renewed experimental evaluation.

\section{Worst-Case Analysis for Higher Order \OCaml{} Programs}
\label{sec:Motivation}

Term rewrite systems form a generic computational model and are well suited for transformations from
various program languages. Consequently our transformational approach first transforms a given
program into a TRS, whose resource consumption is later analysed (see~\cite{Moser17}).
Consider for instance the following program $\BTreeLookup$, encoding lookup for a variant of
$B^{\star}$-trees in $\OCaml$. $B^{\star}$ trees are data structures usually used for indexing
purposes in databases.
The function lookup traverses through the given $B^{\star}$-tree while searching for an element. The
invariant of the tree is, that each node consists of $n+1$ sub-trees, which are ordered and split
among $n$ values. The data resides in the leaves of the tree. The actual number $n$ specifying the
amount of sub-trees is usually set to a value such that via one disk read a full node including the
separation criteria and links to the sub-trees is retrieved. Thus $n$ is related to the underlying
hardware and therefore kept variable to be able to minimises the number of disk reads.

The worst-case runtime complexity of the program is linear in the size of the input tree, whereas the
best-case complexity is constant due to the exceptions being raised for an invalid tree input.
In case only valid tree inputs are considered the best-case complexity is also linear. Therefore once
dropping lines 11 and 15, which is a reasonable decision under the assumption that an encapsulated
inserting mechanism ensures to generate valid trees only, the worst-case and best-case complexities
coincide.
To the best of our knowledge our tool is currently the only one that that can assess the worst-case and best-case complexity
of the example fully automatically. (See~\cite{MS:2017b} for the amortised analysis for best-case complexity of TRSs.)
%
%
\begin{lstlisting}
type btree = Leaf of int list | Node of  int list * btree list;;
let rec anyEq nr ys = match ys with
   | [] -> false
   | x::xs -> if nr == x then true else anyEq nr xs;;
exception InvalidTree;;
let rec lookup n node = match node with
  | Leaf(xs) -> anyEq n xs
  | Node(nrs,tss) -> match nrs with
                     | [] -> (match tss with
                              | tGt::[] -> lookup n tGt
                              | _ -> raise InvalidTree)
                     | nr::ns -> match tss with
                                 | t::ts -> if n <= nr then lookup n t
                                            else lookup n (Node(ns,ts))
                                 | [] -> raise InvalidTree;;
\end{lstlisting}

Exploiting the complexity preserving and complexity reflecting $\hoca$ transformation due to Avanzini et
al.~\cite{ADM15,ADM:2018}, we transform the $\OCaml$ program into an equivalent TRS $\RSbtree$.
The transformation employs defunctionalisation of the higher-order program and applies
transformations like inlining, uncurrying and dead code elimination. The output of the
transformation is a TRS with corresponding type information. The latter may be dropped, if we seek a
standard TRS.
Crucially the fully automatic transformation is \emph{complexity reflecting}, that
is, the complexity bounds of the obtained TRS $\RSbtree$ \emph{reflects} the runtime complexity of the above $\OCaml$ program, so that we can employ the downstream resource analysis of the rewrite system $\RSbtree$ to bound the runtime complexity of the functional program.
The resulting TRS $\RSbtree$ (without type information) is given below.

\begin{alignat*}{2}
  && \anyEq(nr,\nil) &\to \false
  \\
  && \anyEq(nr,x \cons xs) &\to \ite({nr=x}, \true, \anyEq(nr,xs))
  \\
  && \lookup(n,\leaf(xs)) &\to \anyEq(n,xs)
  \\
  && \lookup(n,\node(\nil,tGt \cons \color{black}{\nil})) & \to \lookup(n,tGt)
  \\
  && \lookup(n,\node(nr \cons ns,t \cons ts)) & \to \ite(n \leq nr, \lookup(n,t), \lookup(n, \node(ns,ts)))
  \\
  && \lookup(n,\node(\nil,\color{black}{\nil})) & \to \bot
  \\
  && \color{black}{\lookup(n,\node(\nil,t \cons (t1 \cons ts)))} & \to \bot
  \\
  && \lookup(n,\node(t \cons ts,\nil)) & \to \bot
\end{alignat*}

However, the current version of the $\hoca$ prototype relaxes the exception handling and thus
provides the fourth rule as $\lookup(n,\node(\nil,tGt \cons \mathbf{ts})) \to \lookup(n,tGt)$ and
drops the second-to-last rule which becomes unnecessary. Note that this semantic change has no
effect on the worst-case upper bound. Additionally the tool is unable to type the input program
polymorphically. Thus constructors with different type contexts have to be renamed apart. We will be
concerned with these issues in the future, but these are out of scope of this work.

The subsequent analysis of the obtained TRS is based on the amortised analysis
established in the paper. This univariate amortised analysis is based on the potential method
which is coached in syntax-directed annotated type system, where the function arguments
and the function result are annotated with
resource information, see Section~\ref{Resources}.
If the TRS turns out to be \emph{resource bounded}, that is, it is derivable in the inference
system, then the differences in the potential functions before and after the evaluation provide the
worst-case bounds. This is a consequence of the Soundness Theorem, Theorem~\ref{t:1}. Furthermore we
provide suitable constrains on the resource annotations to allow the deduction of polynomial
innermost runtime complexities, cf. Theorem~\ref{t:2}.

The construction is easy to mechanise as the proof search can be encoded into (in general)
non-linear constraints, which can be subsequently handled by an SMT solver embedded in our prototype
implementation. The non-linearity of the generated constraints stems from the fact that the
inductive data types are not restricted and no type information is required a priori, see
Section~\ref{Implementation}. As this is costly, we incorporated further heuristics to obtain linear
constraints, following Hoffmann et al.~\cite{Hoffmann:2011,HAH12b,HDW:2017}. Although these often
speed up the analysis (see Section~\ref{Experiments}), they require comprehensive type information
as they for example utilize the position of the recursive type parameters. $\hoca$ infers following
types for the $\BTreeLookup$ example.

\begin{align*}
  \bool        & \defsym \bot \mid \false \mid \true\\
  \llist(a)    & \defsym \nil \mid a \cons \llist(a)\\
  \nat         & \defsym \zero \mid \mS(\nat)\\
  \tree(a)     & \defsym \lleaf(\llist(a)) \mid \lnode(\llist(a), \treelist(a))\\
  \treelist(a) & \defsym \niltree \mid a \constree \treelist(a)\\
\end{align*}

For instance in the case of the type $\llist(a)$ the heuristics will select a multiple of
the length of the list as appropriate measure for the potential of the given data structure. Thus,
each application of $\cons$ will require a certain potential $x$ and the second parameter
$\llist(a)$ of the list has to pass the potential on, while $\nil$ and the first parameter of
$\cons$ must be typed with potential $0$. This yields $x$ times the length of the list as potential
for this data structure. Clearly in most cases this makes sense. However, in the context of
user-defined data structures this heuristic may lead to infeasibility.

Nonetheless, as in the run of the $\BTreeLookup$ example from above, it is often the case that the
input program is (strongly) typed. To exploit this information, we make use of a recent
implementation of the $\hoca$ transformation that retains the type information. For $\BTreeLookup$
however, the aforementioned heuristics are inapplicable. This is due to the user-defined data
structure which when being processed from top to bottom alternates between nodes of the tree and the
list of possible sub-trees. Hence \raml{}, which only relies on heuristics, reports an unsupported
recursive type error, while
our prototype implementation can derive the linear worst-case complexity of $\BTreeLookup$ in terms
of size of the input tree 
fully automatically.

\section{Preliminaries}
\label{Preliminaries}

We assume familiarity with term rewriting~\cite{BN98,TeReSe}
but briefly review basic concepts and notations.

Let $\VS$ denote a countably infinite set of variables and $\FS$ a
signature, such that $\FS$ contains at least one constant.
The set of terms over $\FS$ and $\VS$ is denoted by~$\TERMS$.
We write $\Var(t)$ to denote the set of variables occurring in term $t$.
The \emph{size} $\size{t}$ of a term is defined
as the number of symbols in~$t$.

We suppose $\FS = \CS \dunion \DS$, where
$\CS$ denotes a finite, non-empty set of \emph{constructor symbols},
$\DS$ is a finite set of \emph{defined function symbols}, and $\dunion$ denotes
disjoint union.
Defined function symbols are sometimes referred to as \emph{operators}.
A term $t$ is \emph{linear} if every variable in $t$ occurs only once. A term
$t'$ is the \emph{linearisation} of a non-linear term $t$ if the variables in
$t$ are renamed apart such that $t'$ becomes linear. The notion generalises to
sequences of terms.
A term $t = f(t_1,\dots,t_k)$ is called \emph{basic},
if $f$ is defined, and all $t_i \in \TA(\CS,\VS)$.
We write $\dom(\sigma)$ ($\range(\sigma)$) to denote the domain (range) of $\sigma$.

Let ${\to} \subseteq S \times S$ be a binary relation. We denote by $\to^+$ the
transitive and by $\to^*$ the transitive and reflexive closure of $\to$. By $\to^n$
we denote the $n$-fold application of $\to$. If $t$ is in normal form with respect
to $\to$, we write $s \to^{!} t$. We say that $\to$ is \emph{well-founded} or
\emph{terminating} if there is no infinite sequence $s_0 \to s_1 \to \dots$\ . It
is \emph{finitely branching} if the set $\{ t \mid s \to t\}$ is finite for each
$s \in S$. For two binary relations ${\to_A}$ and ${\to_B}$, the relation of
$\to_A$ \emph{relative} to $\to_B$ is defined by
${\relative{{\to_A}}{{\to_B}}} \defsym {\to_B^* \cdot \to_A \cdot \to_B^*}$.

A \emph{rewrite rule} is a pair $l \to r$ of terms, such that (i) the root symbol
of $l$ is defined, and (ii) $\Var(l) \supseteq \Var(r)$. A \emph{term rewrite
  system} (TRS) over $\FS$ is a finite set of rewrite rules. Observe that
TRSs need not be constructor systems, that is, arguments of left-hand
sides of rules may contain defined symbols. Such function symbols are called
\emph{constructor-like}, as below they will be sometimes subject to similar restrictions as
constructor symbols.

The set of normal forms of a TRS $\RS$ is denoted as $\NF(\RS)$, or $\NF$ for short.
We call a substitution $\sigma$ \emph{normalised with respect to $\RS$} if all terms in the
range of $\sigma$ are ground normal forms of $\RS$. Typically $\RS$ is clear from context, so
we simply speak of a \emph{normalised} substitution.
In the sequel we are concerned with \emph{innermost} rewriting, that is,
an eager evaluation strategy. Furthermore, we consider relative rewriting.

A TRS is \emph{left-linear} if all rules are left-linear, it is \emph{non-overlapping}
if there a no critical pairs, that is, no ambiguity exists in applying rules.
A TRS is \emph{orthogonal} if it is left-linear and non-overlapping.
A TRS is \emph{completely defined} if all ground normal-forms are values.
Note that an orthogonal TRS is confluent. A TRS is \emph{constructor} if
all arguments of left-hand sides are basic.

The \emph{innermost rewrite relation} $\irew$
of a TRS $\RS$ is defined on terms as follows: $s \rsirew t$ if
there exists a rewrite rule $l \to r \in \RS$, a context $C$, and
a substitution $\sigma$ such that $s = C[l\sigma]$, $t = C[r\sigma]$,
and all proper subterms of $l\sigma$ are normal forms of $\RS$.
In order to generalise the innermost rewriting relation
to relative rewriting, we introduce the slightly technical
construction of the \emph{restricted} rewrite relation~\cite{T07}.
The \emph{restricted rewrite relation $\toss{\QS}_{\RS}$}
is the restriction of $\rsrew$ where all arguments of the redex are in
normal form with respect to the TRS $\QS$.
We define the \emph{innermost rewrite relation}, dubbed $\rsirew[\relative{\RS}{\RSS}]$,
of a relative TRS $\relative{\RS}{\RSS}$ as follows.
\begin{equation*}
  {\rsirew[\relative{\RS}{\RSS}]} \defsym
  {{\toss{\RS \cup \RSS}_{\RSS}^{\ast}} \cdot
    {\toss{\RS \cup \RSS}_{\RS}} \cdot
    {\toss{\RS \cup \RSS}_{\RSS}^{\ast}}} \tpkt
\end{equation*}
Observe that ${\rsirew[\RS]} = {\rsirew[\relative{\RS}{\varnothing}]}$ holds.

Let $s$ and $t$ be terms, such that $t$ is in normal-form.
Then a \emph{derivation} $D \colon s \rssrew t$ with respect to a TRS $\RS$ is a finite sequence of rewrite steps.
The \emph{derivation height} of a term $s$ with respect to a
well-founded, finitely branching relation $\to$ is defined as
$\dheight(s,\to) = \max\{ n \mid \exists t \; s \to^n t \}$.

\begin{definition}
  \label{d:runtimecomplexity}
  We define the \emph{innermost runtime complexity}
  (with respect to $\relative{\RS}{\RSS}$):
  $\rc_{\RS}(n) \defsym {} \max \{ \dheight(t,\rsirew[\relative{\RS}{\RSS}]) \mid \text{$t$ is basic and
    $\size{t} \leqslant n$}\}$.
\end{definition}
Intuitively the innermost runtime complexity wrt.\ $\relative{\RS}{\RSS}$ counts
the maximal number of eager evaluation steps in $\RS$ in a derivation over $\RS \cup \RSS$. In the
definition, we tacitly assume that $\rsirew[\relative{\RS}{\RSS}]$ is terminating and finitely
branching.

For the rest of the paper the relative TRS $\relative{\RS}{\RSS}$ and its signature
$\FS$ are fixed. In the sequel of the paper, substitutions are assumed to be
normalised with respect to $\RS \cup \RSS$.

\section{Resource Annotations}
\label{Resources}

In this section, we establish a novel amortised resource analysis for TRSs.
This analysis is based on the potential method and coached in an inference system.
Firstly, we annotate the (untyped) signature by the prospective resource usage
(Definition~\ref{d:annotations}).
Secondly, we define a suitable inference system, akin to a type system.
Based on this inference system we delineate a class of
\emph{resource bounded} TRSs (Definition~\ref{d:resource-bounded}) for which we
deduce polynomial bounds on the innermost runtime complexity for a
suitably chosen class of annotations, cf.~Theorem~\ref{t:2}.

A \emph{resource annotation} $\vec{p}$ is a vector $\vec{p}=(p_1,\dots,p_k)$ over
non-negative rational numbers. The vector $\vec{p}$ is also simply called
\emph{annotation}.
Resource annotations are denoted by $\vec{p}$, $\vec{q}$, $\vec{u}$, $\vec{v}$, \dots, possibly
extended by subscripts and we write $\Vecs$ for the set of such annotations. For resource
annotations $(p)$ of length $1$ we write $p$. A resource annotation does not change its meaning if
zeroes are appended at the end, so, conceptually, we can identify $()$ with $(0)$ and also with $0$.
If $\vec p=(p_1,\dots,p_k)$ we set $k \defsym \len{\vec{p}}$ and
$\max\vec{p} \defsym \max \{p_i \mid i=1,\dots,k\}$. We define the notations
$\vec p\leqslant \vec q$ and $\vec p+\vec q$ and $\lambda \vec p$ for
$\lambda \geqslant 0$ component-wise, filling up with $0$s if needed. So, for
example $(1,2)\leqslant (1,2,3)$ and $(1,2)+(3,4,5)=(4,6,5)$.

\begin{definition}
  \label{d:annotations}
  Let $f$ be a function symbol of arity $n$. We annotate the
  arguments and results of $f$ by \emph{resource annotations}. A (resource) annotation for
  $f$, decorated with $k \in \Qplus$, is denoted as
  $\atypdcl{\vec{p_1} \times \cdots \times \vec{p_n}}{\vec{q}}{k}$.
  The set of annotations is denoted $\TDannot$.
\end{definition}

We lift signatures $\FS$ to \emph{annotated signatures}
$\FS \colon \CS \cup \DS \to (\pow(\TDannot)\setminus \varnothing)$
by mapping a function symbol to a non-empty set of resource annotations.
Hence for any function symbol we allow multiple types. In the context
of operators this is also referred to as \emph{resource polymorphism}.
The inference system, presented below, mimics a type system,
where the provided annotations play the role of types.
If the annotation of a constructor or constructor-like symbol $f$ results in
$\vec{q}$, there must only be exactly one declaration of the form
$\atypdcl{\vec{p_1} \times \cdots \times \vec{p_n}}{\vec{q}}{k}$ in $\FS(f)$, that
is, the annotation has to be \emph{unique}. Moreover, annotations for constructor
and constructor-like symbols $f$ must satisfy the \emph{superposition principle}:
If $f$ admits the annotations
$\atypdcl{\vec{p_1} \times \cdots \times \vec{p_n}}{\vec{q}}{k}$ and
$\atypdcl{\vec{p'_1} \times \cdots \times \vec{p'_n}}{\vec{q'}}{k'}$ then it also
has the annotations
$\atypdcl{{\lambda \vec{p_1}} \times \cdots \times {\lambda \vec{p_n}}}{{\lambda
    \vec{q}}}{\lambda k}$ ($\lambda \in \Qplus$, $\lambda\geqslant 0$) and
$\atypdcl{{\vec{p_1}+\vec{p'_1}} \times \cdots \times
  {\vec{p_n}+\vec{p'_n}}}{{\vec{q}+\vec{q'}}}{k+k'}$.

\begin{example}
  \label{ex:2}
  To exemplify consider the sets $\DS = \{\enq, \rev, \revp, \snoc, \checkF, \head, \tail\}$ and
  $\CS = \{\nil, \cons, \queue, \zero, \mS\}$,
  which together make up the signature $\FS$
  of the motivating example $\RSa$ in Figure~\ref{fig:1}.
  Annotations of the constructors $\nil$ and
  $\cons$ would for example be as follows.
  $\FS(\nil) = \{\atypdcl{}{k}{0} \mid k\geqslant 0\}$ and
  $\FS(\cons) = \{\atypdcl{0 \times k}{k}{k} \mid k\geqslant 0\}$. These annotations
  are unique and fulfill the superposition principle.
\qed
\end{example}

Note that, in view of superposition and uniqueness, the annotations of a given constructor or
constructor-like symbol are  uniquely determined once we fix the resource annotations
for result annotations of the form $(0,\dots,0,1)$
(remember the implicit filling up with $0$s).
An annotated signature $\FS$ is simply called signature, where we sometimes
write $\typed{f}{\atypdcl{\vec{p}_1 \times \cdots \times \vec{p}_n}{\vec{q}}{k}}$
instead of $\atypdcl{\vec{p}_1 \times \cdots \times \vec{p}_n}{\vec{q}}{k} \in \FS(f)$.

The next definition introduces the notion of the potential of a normal form.
For rules $f(l_1,\dots,l_n) \to r$ in non-constructor TRSs the left-hand side
$f(l_1,\dots,l_n)$ need not necessarily be basic terms. However, the arguments
$l_i$ are deconstructed in the rule (\textsf{app}) that we will see in
Figure~\ref{fig:2}. This deconstruction may free potential, which needs to be
well-defined. This makes it necessary to treat defined function symbols in $l_i$
similar to constructors in the inference system (see Definition~\ref{d:freed}).

\begin{definition}
  \label{d:potential}
  Let $v = f(v_1,\dots,v_n)$ be a normal form and let $\qq$ be a resource annotation. We define
  the \emph{potential} of $v$ with respect to $\qq$, written $\Phi(\typed{v}{\qq})$ by cases.
  First suppose $v$ contains only constructors or constructor-like symbols.
  Then the potential is defined recursively.
  \begin{equation*}
    \Phi(\typed{v}{\qq}) \defsym k + \Phi(\typed{v_1}{\pp_1}) + \cdots + \Phi(\typed{v_n}{\pp_n})
    \tkom
  \end{equation*}
  %
  where $\atypdcl{\pp_1 \times \cdots \times \pp_n}{\qq}{k} \in \FS(f)$.
  Otherwise, we set $\Phi(\typed{v}{\qq}) \defsym 0$.
\end{definition}

The \emph{sharing relation}
$\share{\vec{p}}{\vec{p_1},\vec{p_2}}$ holds if $\vec{p_1} + \vec{p_2} = \vec p$.

\begin{lemma}
  \label{l:1}
  Let $v$ be a a normal form.
  If $\share{\vec{p}}{\vec{p_1},\vec{p_2}}$ then
  $\Phi(\typed{v}{\vec{p}}) = \Phi(\typed{v}{\vec{p_1}}) + \Phi(\typed{v}{\vec{p_2}})$.
  Furthermore, if $\vec{p} \leqslant \vec{q}$ then
  $\Phi(\typed{v}{\vec p}) \leqslant \Phi(\typed{v}{\vec q})$.
\end{lemma}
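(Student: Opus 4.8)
The plan is to prove both assertions simultaneously by structural induction on the normal form $v$, following the case distinction of Definition~\ref{d:potential}. If $v$ does not consist exclusively of constructors and constructor-like symbols, then $\Phi(\typed{v}{\vec{r}}) = 0$ for every annotation $\vec{r}$, so the first claim reduces to $0 = 0 + 0$ and the second to $0 \leqslant 0$, and both hold trivially. The interesting case is $v = f(v_1,\dots,v_n)$ where $f$ is a constructor or constructor-like symbol; then each argument $v_i$ is again a normal form of the same kind (or is covered by the previous case), so the induction hypothesis applies to it.

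In this case, by uniqueness of constructor annotations there is a single declaration $\atypdcl{\pp_1 \times \cdots \times \pp_n}{\vec{p}}{k} \in \FS(f)$ with result $\vec{p}$, and likewise unique declarations $\atypdcl{\pp_1^{(1)} \times \cdots \times \pp_n^{(1)}}{\vec{p_1}}{k_1}$ and $\atypdcl{\pp_1^{(2)} \times \cdots \times \pp_n^{(2)}}{\vec{p_2}}{k_2}$ with results $\vec{p_1}$ and $\vec{p_2}$. Since $\share{\vec{p}}{\vec{p_1},\vec{p_2}}$ means $\vec{p} = \vec{p_1} + \vec{p_2}$, the superposition principle shows that $f$ also admits the declaration $\atypdcl{(\pp_1^{(1)}+\pp_1^{(2)}) \times \cdots \times (\pp_n^{(1)}+\pp_n^{(2)})}{\vec{p_1}+\vec{p_2}}{k_1+k_2}$; by uniqueness this must coincide with the declaration having result $\vec{p}$, so $k = k_1 + k_2$ and $\pp_i = \pp_i^{(1)} + \pp_i^{(2)}$, i.e.\ $\share{\pp_i}{\pp_i^{(1)},\pp_i^{(2)}}$, for every $i$. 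Unfolding $\Phi$ by Definition~\ref{d:potential}, applying the induction hypothesis to each $v_i$, and regrouping the summands then gives $\Phi(\typed{v}{\vec{p}}) = k + \sum_{i=1}^{n} \Phi(\typed{v_i}{\pp_i}) = \bigl(k_1 + \sum_{i=1}^{n}\Phi(\typed{v_i}{\pp_i^{(1)}})\bigr) + \bigl(k_2 + \sum_{i=1}^{n}\Phi(\typed{v_i}{\pp_i^{(2)}})\bigr) = \Phi(\typed{v}{\vec{p_1}}) + \Phi(\typed{v}{\vec{p_2}})$, which is the first claim.

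For the monotonicity statement, note first that $\Phi(\typed{v}{\vec{r}}) \geqslant 0$ for every normal form $v$ and annotation $\vec{r}$ — an immediate structural induction using that $k \geqslant 0$ in each annotation. Given $\vec{p} \leqslant \vec{q}$, choose the annotation $\vec{r}$ with $\vec{p} + \vec{r} = \vec{q}$, which exists because $\vec{p} \leqslant \vec{q}$ holds componentwise (after the implicit filling with $0$s); then $\share{\vec{q}}{\vec{p},\vec{r}}$, and the first part yields $\Phi(\typed{v}{\vec{q}}) = \Phi(\typed{v}{\vec{p}}) + \Phi(\typed{v}{\vec{r}}) \geqslant \Phi(\typed{v}{\vec{p}})$. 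The only step here that is not pure bookkeeping is the passage from $\vec{p} = \vec{p_1}+\vec{p_2}$ to the componentwise split of the argument annotations and of the constant $k$: this is exactly what the combination of the superposition principle with uniqueness of constructor annotations is designed to provide, and it is worth isolating since it is the hinge of the whole argument. (One could alternatively avoid the non-negativity remark by proving the monotonicity part through a separate but entirely analogous induction.)
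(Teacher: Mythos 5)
Your proof is correct and takes essentially the same route as the paper's: structural induction on $v$, with the case of a term containing a defined (non constructor-like) symbol trivial because the potential is $0$, the constructor/constructor-like case handled by combining the superposition principle with uniqueness to split the constant $k$ and the argument annotations additively, and the monotonicity claim obtained from the first claim together with non-negativity of potentials. You simply spell out in full the propagation argument that the paper's proof only sketches via an example.
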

\begin{proof}
  The proof of the first claim is by induction on the structure of $v$. For
  constructor or constructor-like symbols the proof follows the pattern of
  Lemma~10 in~\cite{HM:2014}.
  We distinguish two cases. Either $v$ is only build from constructor symbols or
  constructor-like symbols. Then by superposition together with uniqueness the
  additivity property propagates to the argument types. For example, if we have the
  annotations $\mS:\atypdcl{6}{12}{3}$, $\mS:\atypdcl{10}{20}{5}$, and
  $\mS:\atypdcl{x}{y}{8}$ then we can conclude $x=16$, $y=22$, for this annotation
  must be present by superposition and there can only be one by uniqueness.
  Otherwise $v$ contains at least one $f \in \DS$,
  but $f$ does not occur as
  argument of a left-hand side in $\RS$. By definition $\Phi(\typed{v}{\qq}) = 0$.
  Thus the lemma holds trivially.
  The second claim follows from the first one and non-negativity of potentials.
\qed
\end{proof}

A \emph{(variable) context} is a partial mapping from variables $\VS$ to
annotations. Contexts are denoted by upper-case Greek letters and depicted as
sequences of pairs $\typed{x}{\qq}$ of variables and annotations, where
$\typed{x}{\qq}$ in a variable context means that the resource $\qq$ can be
distributed over all occurrences of the variable $x$ in the term.

\begin{figure}[t]
  \centering
  \begin{tabular}{c}
    $\infer[(\mathsf{app})]{\tjudge{\typed{x_1}{\pp_1},\dots,\typed{x_n}{\pp_n}}{k}{\typed{f(x_1,\dots,x_n)}{\qq}}}{%
    f \in \CS \cup \DS
    &
      \atypdcl{\pp_1 \times \cdots \times \pp_n}{\qq}{k} \in \FS(f)
      }$
      \qquad \hfill \qquad
      $\infer[(\mathsf{w_{1}})]{\tjudge{\Gamma}{k'}{\typed{t}{\qq}}}{%
      \tjudge{\Gamma}{k}{\typed{t}{\qq}}
    &
      k' \geqslant k
      }$
    \\[2.5ex]
    $\infer[(\mathsf{comp})]{\tjudge{\Gamma_1,\dots,\Gamma_n}{k}{\typed{f(t_1,\dots,t_n)}{\qq}}}{%
    \begin{minipage}[b]{40ex}
      all $x_i$ are fresh\\[.5ex]
      $\tjudge{\typed{x_1}{\pp_1},\dots,\typed{x_n}{\pp_n}}{k_0}{\typed{f(x_1,\dots,x_n)}{\qq}}$
    \end{minipage}
    &
      \begin{minipage}[b]{32ex}
        $k = \sum_{i=0}^n k_i$\\[.5ex]
        $\tjudge{\Gamma_1}{k_1}{\typed{t_1}{\pp_1}} \ \cdots \
        \tjudge{\Gamma_n}{k_n}{\typed{t_n}{\pp_n}}$
      \end{minipage}
    }$%
    \\[2.5ex]
    $\infer[(\mathsf{w_{4}})]{\tjudge{\Gamma, \typed{x}{\pp}}{k}{\typed{t}{\qq}}}{%
    \tjudge{\Gamma}{k}{\typed{t}{\qq}}
    }$
    \qquad \hfill \qquad
    $\infer[(\mathsf{share})]{\tjudge{\Gamma, \typed{z}{\pp}}{k}{\typed{t[z,z]}{\qq}}}{%
    \tjudge{\Gamma, \typed{x}{\rr}, \typed{y}{\vecs}}{k}{\typed{t[x,y]}{\qq}}%
    &
      \share{\pp}{\rr,\vecs}
    &
      \text{$x$, $y$ are fresh}
      }$
    \\[2.5ex]
    $\infer[(\mathsf{w_{2}})]{\tjudge{\Gamma, \typed{x}{\pp}}{k}{\typed{t}{\qq}}}{%
    \tjudge{\Gamma, \typed{x}{\rr}}{k}{\typed{t}{\qq}}
    &
      \pp \geqslant \rr
      }$
      \hfill
      $\infer[(\mathsf{var})]{\tjudge{\typed{x}{\qq}}{0}{\typed{x}{\qq}}}{}$
      \hfill
      $\infer[(\mathsf{w_{3}})]{\tjudge{\Gamma}{k}{\typed{t}{\qq}}}{%
      \tjudge{\Gamma}{k}{\typed{t}{\vecs}}
    &
      \vecs \geqslant \qq
      }$
  \end{tabular}
  \caption{Inference System for Term Rewrite Systems.}
  \label{fig:2}
\end{figure}

\begin{definition}
  \label{d:inferencesystem}
  Our potential based amortised analysis is coached in an inference system whose
  rules are given in Figure~\ref{fig:2}. Let $t$ be a term and $\qq$ a resource
  annotation. The inference system derives judgements of the form
  $\tjudge{\Gamma}{k}{\typed{t}{\qq}}$, where $\Gamma$ is a variable context and
  $k \in \Qplus$ denotes the amortised costs at least required to evaluate $t$.

  Furthermore, we define a subset of the inference rules, free of weakening rules,
  dubbed the \emph{footprint} of the judgement, denoted as
  $\footprint{\Gamma}{k}{\typed{t}{\qq}}$. For the footprint we only
  consider the inference rules (\textsf{app}), (\textsf{comp}), (\textsf{share}), and (\textsf{var}).

  Occasionally we omit the amortised costs from both judgements using the
  notations $\tjudge{\Gamma}{}{\typed{t}{\qq}}$ and
  $\footprint{\Gamma}{}{\typed{t}{\qq}}$.
\end{definition}

To ease the presentation we have omitted certain conditions, like the pairwise
disjointedness of $\Gamma_1,\dots,\Gamma_n$ in the rule (\textsf{comp}), that make
the inference rules deterministic. However, the implementation (see
Section~\ref{Implementation}) is deterministic, which removes redundancy in
constraint building and thus improves performance.
A substitution is called \emph{consistent with $\Gamma$} if
for all $x \in \dom(\sigma)$ if $\tjudge{\Gamma}{}{\typed{x}{\qq}}$,
then $\tjudge{\Gamma}{}{\typed{x\sigma}{\qq}}$. Recall that substitutions
are assumed to be normalised.
Let $\Gamma$ be a context and let $\sigma$ be a substitution
consistent with $\Gamma$. Then $\Phi(\typed{\sigma}{\Gamma}) \defsym \sum_{x \in \dom(\Gamma)}
\Phi(\typed{x\sigma}{\Gamma(x)})$.

\begin{definition}
  \label{d:freed}
  Let $f(l_1,\dots,l_n) \to r$, $n \geqslant 1$, be a rule in the TRS $\relative{\RS}{\RSS}$.
  Further suppose $\typed{f}{\atypdcl{\pp_1 \times \cdots \times \pp_n}{\qq}{k}}$ is
  a resource annotation for $f$ and let $V \defsym \{y_1,\dots,y_m\}$ denote
  the set of variables in the left-hand side of the rule.
  The potential \emph{freed} by the rule is a pair consisting of
  a variable context $\typed{y_1}{\rr_1}, \dots, \typed{y_m}{\rr_m}$ and
  an amortised cost $\ell$, defined as follows:
  \begin{itemize}
  \item The sequence $l'_1,\dots,l'_n$ is a linearisation of $l_1,\dots,l_n$.
    Set $Z \defsym \bigcup_{i=1}^n \Var(l'_i)$ and let $Z = \{z_1,\dots,z_{m'}\}$, where
    $m' \geqslant m$.
  \item There exist annotations $\vecs_1,\dots,\vecs_{m'}$ such that for all $i$ there
    exist costs $\ell_i$ such that
    $\footprint{\typed{z_1}{\vecs_1},\dots,\typed{z_{m'}}{\vecs_{m'}}}{\ell_i}{\typed{l'_i}{\pp_i}}$.
  \item Let $y_j \in V$ and let $\{z_{j_1},\dots,z_{j_o}\} \subseteq Z$ be all
    renamings of $y_j$. Define annotations $\rr_j \defsym \vecs_{j_1} + \cdots + \vecs_{j_o}$.
  \item Finally, $\ell \defsym \sum_{i=1}^n \ell_i$.
  \end{itemize}
\end{definition}

\begin{example}
Consider the rule $\enq(\mS(n)) \to \snoc(\enq(n),n)$ in the
running example, together with the annotated signature $\typed{\enq}{\atypdcl{15}{7}{12}}$.
The left-hand side contains the subterm $\mS(n)$. Using the generic annotation
$\typed{\mS}{\atypdcl{k}{k}{k}}$, the footprint $\footprint{\typed{n}{k}}{k}{\typed{\mS(n)}{k}}$
is derivable for any $k \geqslant 0$. Thus, in particular the rule frees the
context $\typed{n}{15}$ and cost $15$.
\qed
\end{example}

\begin{lemma}
  \label{l:2}
  Let ${f(l_1,\dots,l_n) \to r} \in \relative{\RS}{\RSS}$ and let
  $\typed{c}{\atypdcl{\pp_1 \times \cdots \times \pp_n}{\qq}{0}}$ denote a fresh, cost-free
  constructor. Let $\typed{y_1}{\rr_1}, \dots, \typed{y_m}{\rr_m}$ and $\ell$ be
  freed by the rule. We obtain:
  $\footprint{\typed{y_1}{\rr_1},\dots,\typed{y_m}{\rr_m}}{\ell}{\typed{c(l_1,\dots,l_n)}{\qq}}$.
\end{lemma}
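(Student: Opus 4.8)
The plan is to construct the required footprint derivation bottom-up, directly from the ingredients supplied by Definition~\ref{d:freed}. First I would instantiate the rule (\textsf{app}) at the fresh cost-free constructor $c$: since $\atypdcl{\pp_1 \times \cdots \times \pp_n}{\qq}{0} \in \FS(c)$ we obtain $\footprint{\typed{x_1}{\pp_1},\dots,\typed{x_n}{\pp_n}}{0}{\typed{c(x_1,\dots,x_n)}{\qq}}$ for fresh variables $x_1,\dots,x_n$. Definition~\ref{d:freed} provides, for every $i$, a footprint derivation of $\typed{l'_i}{\pp_i}$ with cost $\ell_i$, read as having its context restricted to the free variables of $l'_i$, i.e.\ assigning $\vecs_j$ to each $z_j \in \Var(l'_i)$; write $\Gamma_i$ for that context. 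Since $l'_1,\dots,l'_n$ is a linearisation, the sets $\Var(l'_i)$ are pairwise disjoint, hence the $\Gamma_i$ are pairwise disjoint and $\Gamma_1,\dots,\Gamma_n$ together list exactly $\typed{z_1}{\vecs_1},\dots,\typed{z_{m'}}{\vecs_{m'}}$. Applying the rule (\textsf{comp}) with $f := c$ and $t_i := l'_i$ — the first premise being discharged as above with $k_0 = 0$ — yields $\footprint{\typed{z_1}{\vecs_1},\dots,\typed{z_{m'}}{\vecs_{m'}}}{\ell}{\typed{c(l'_1,\dots,l'_n)}{\qq}}$ with $\ell = \sum_{i=1}^n \ell_i$.

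Second, I would recover $c(l_1,\dots,l_n)$ from its linearisation $c(l'_1,\dots,l'_n)$ by repeated use of the rule (\textsf{share}). For a variable $y_j \in V$ with renamings $z_{j_1},\dots,z_{j_o} \in Z$, I would apply (\textsf{share}) $o-1$ times, each step contracting the current partial merge with the next copy and invoking $\share{\vecs_{j_1}+\cdots+\vecs_{j_t}}{\vecs_{j_1}+\cdots+\vecs_{j_{t-1}},\vecs_{j_t}}$ at step $t$; this replaces those copies by $y_j$ carrying the annotation $\vecs_{j_1}+\cdots+\vecs_{j_o} = \rr_j$ and leaves the cost unchanged. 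Carrying this out for every $y_j$ — the merges for distinct $j$ do not interfere, the renamings being disjoint — turns the context into $\typed{y_1}{\rr_1},\dots,\typed{y_m}{\rr_m}$ and the term into $c(l_1,\dots,l_n)$, which is precisely the claimed judgement.

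The step most likely to need care is the application of (\textsf{comp}): one must check that its first premise is available with \emph{zero} amortised cost — exactly where cost-freeness of $c$ is used — and that the argument contexts are pairwise disjoint, which is where the linearisation is essential (and which also justifies the mild re-reading of the contexts in Definition~\ref{d:freed}, since the footprint system admits no weakening). For the (\textsf{share}) phase the only subtlety is bookkeeping: confirming that the annotations accumulated along the $o-1$ contractions sum precisely to the $\rr_j$ of Definition~\ref{d:freed}. Finally, the degenerate case of a ground left-hand side ($Z = V = \varnothing$) needs no (\textsf{share}) step, and (\textsf{comp}) already delivers the claim with the empty context.
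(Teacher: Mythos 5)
Your proposal is correct and takes essentially the same route as the paper's own proof: a single application of (\textsf{comp}) at the cost-free constructor $c$ on the linearised arguments $l'_1,\dots,l'_n$, followed by repeated applications of (\textsf{share}) to contract the renamed copies of each $y_j$ into the annotation $\rr_j = \vecs_{j_1}+\cdots+\vecs_{j_o}$. Your write-up is merely more explicit than the paper about the zero-cost (\textsf{app}) premise for $c$ and the pairwise disjointness of the argument contexts, points the paper leaves implicit.
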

\begin{proof}
  By assumption there exists a linearisation $l'_1,\dots,l'_n$ of the arguments of the left-hand
  side of the rule. By definition no variable occurs twice in the sequence $l'_1,\dots,l'_n$.
  Furthermore, for every $i=1,\dots,n$, the following judgement is derivable:
  \begin{equation}
    \label{eq:1}
    \footprint{\typed{z_1}{\vecs_1},\dots,\typed{z_{m'}}{\vecs_m}}{\ell_i}{\typed{l'_i}{\pp_i}}
    \tkom
  \end{equation}
  where $\ell = \sum_{i=1}^n \ell_i$. Observe that the definition of the
  annotations $\rr_j$ embodies a repeated application of the sharing rule
  \textsf{share}. Thus in order to prove the lemma, it suffices to derive
  $\footprint{\typed{z_1}{\vecs_1},\dots,\typed{z_{m'}}{\vecs_{m'}}}{\ell}{\typed{c(l'_1,\dots,l'_n)}{\qq}}$.
  However, to derive the latter a single composition rule, together with the assumed
  derivations~\eqref{eq:1} suffices.
  \qed
\end{proof}

Based on Definition~\ref{d:freed} we can now succinctly define resource boundedness
of a TRS. The definition constitutes a non-trivial generalisation of Definition~11 in~\cite{HM:2014}.
First the input TRS need no longer be sorted. Second the restriction on constructor TRSs
has been dropped and finally, the definition has been extended to handle relative rewriting.
\begin{definition}
  \label{d:resource-bounded}
  Let $\relative{\RS}{\RSS}$ be a relative TRS, let $\FS$ be a signature and let $f \in \FS$.
  An annotation $\atypdcl{\pp_1 \times \cdots \times \pp_n}{\qq}{k} \in \FS(f)$ is called
  \emph{resource bounded} if for any rule $f(l_1,\dots,l_n) \to r \in \RS \cup \RSS$, we have
  \begin{equation*}
    \tjudge{\typed{y_1}{\rr_1}, \dots, \typed{y_l}{\rr_l}}{k+\ell-\rulecost}{\typed{r}{\qq}}
    \tkom
  \end{equation*}
  %
  where $\typed{y_1}{\rr_1}, \dots, \typed{y_l}{\rr_l}$ and $\ell$ are freed
  by the rule if $n \geqslant 1$ and $\ell=0$ otherwise.
  Here, the cost $\rulecost$
  for the application of the rule is defined as follows:
  (i) $\rulecost \defsym 1$ iff ${f(l_1,\dots,l_n) \to r} \in \RS$ and
  (ii) $\rulecost \defsym 0$ iff ${f(l_1,\dots,l_n) \to r} \in \RSS$.
  We call an annotation \emph{cost-free} resource bounded if the cost
  $\rulecost$ is always set to zero.
\end{definition}

  A function symbol $f$ is called \emph{(cost-free) resource bounded} if any
  resource annotation in $\FS(f)$ is (cost-free) resource bounded.
  Finally, $\relative{\RS}{\RSS}$ is called \emph{resource bounded}, or simply \emph{bounded} if any $f \in \FS$ is resource bounded.
Observe that boundedness of $\relative{\RS}{\RSS}$ entails
that the application of rules in the strict part $\RS$ is counted, while the weak part
$\RSS$ is not counted.

In a nutshell, the method works as follows:
Suppose the judgement $\tjudge{\Gamma}{k'}{\typed{t}{\qq}}$ is derivable
and suppose $\sigma$ is consistent with $\Gamma$.
The constant $k'$ is an upper-bound to the amortised cost required for reducing $t$
to normal form. Below we will prove that the derivation height of $t\sigma$ (with
respect to innermost rewriting) is bounded by the difference in the potential
before and after the evaluation plus $k'$. Thus if the sum of the potentials of the
arguments of $t\sigma$ is in $\bO(n^k)$, where $n$ is the size of the arguments and
$k$ the maximal length of the resource annotations needed, then the innermost
runtime complexity of $\relative{\RS}{\RSS}$ lies in $\bO(n^k)$.

More precisely consider the \textsf{comp} rule. First note that this rule is only
applicable if $f(t_1,\dots,t_n)$ is linear, which can always be obtained by the use
of the sharing rule. Now the rule embodies that the amortised costs $k'$ required to
evaluate $t\sigma$ can be split into those costs $k'_i$ ($i\geqslant 1$) required
for the normalisation of the arguments and the cost $k'_0$ of the evaluation of the
operator $f$. Furthermore the potential provided in the context
$\Gamma_1,\dots,\Gamma_n$ is suitably distributed. Finally the potential which
remains after the evaluation of the arguments is made available for the evaluation
of the operator $f$.

Before we proceed with the formal proof of this intuition, we exemplify
the method on the running example.

\begin{example}[continued from Example~\ref{ex:2}]
  \label{ex:3}
  \tct\ derives the following annotations for the operators in the
  running example.
  \begin{alignat*}{4}
    \enq &\colon \atypdcl{15}{7}{12} &\qquad&
    \rev &\colon \atypdcl{1}{0}{4} &\qquad&
    \revp &\colon \atypdcl{1 \times 0}{0}{2}
    \\
    \snoc &\colon \atypdcl{7 \times 0}{7}{14} &\qquad&
    \head &\colon \atypdcl{11}{0}{9} &\qquad&
    \tail &\colon \atypdcl{11}{1}{3}
  \end{alignat*}
\qed
\end{example}

We consider resource boundedness of $\RSa$ with respect to the given (monomorphic)
annotated signatures of Example~\ref{ex:3}.
For simplicity we restrict to boundedness of $\enq$.
We leave it to the reader to check the other cases.
In addition to the annotations
for constructor symbols (cf.~Example~\ref{ex:2})
we can always assume the presence of zero-cost annotations, e.g.\
$\cons \colon \atypdcl{0\times 0}{0}{0}$.
Observe that Rule~6 frees the context $\typed{n}{15}$ and cost $15$.
Thus, we obtain the following derivation.
  \begin{equation*}
  \infer[(\mathsf{share})]{\tjudge{\typed{n}{15}}{26}{\typed{\snoc(\enq(n),n)}{7}}}{%
    \infer[(\mathsf{comp})]{\tjudge{\typed{n_1}{15},\typed{n_2}{0}}{26}{\typed{\snoc(\enq(n_1),n_2)}{7}}}{%
      \infer[(\mathsf{app})]{\tjudge{\typed{q}{7},\typed{m}{0}}{14}{\typed{\snoc(q,m)}{7}}}{%
        \typed{\snoc}{\atypdcl{7 \times 0}{7}{14}}
        }
        &
      \infer[(\mathsf{var})]{\tjudge{\typed{n_2}{0}}{0}{\typed{n_2}{0}}}{}
        &
      \infer[(\mathsf{app})]{\tjudge{\typed{n_1}{15}}{12}{\typed{\enq(n_1)}{7}}}{%
        \typed{\enq}{\atypdcl{15}{7}{12}}
       }
    }
  }
 \end{equation*}
In comparison to~\cite[Example~13]{HM:2014}, where the annotations were found manually,
we note that the use of the interleaving operation~\cite{HM:2014} has been avoided.
This is due to the more general class of annotations considered in our prototype
implementation (see Section~\ref{Implementation}).

The footprint relation forms a restriction of the judgement $\tjudge{}{}{}$ without
the use of weakening. Hence the footprint allows a precise control of the
resources stored in the substitutions, as indicated by the next lemma.

\begin{lemma}
  \label{l:3}
  Let $t$ be a normal form w.r.t.\ $\RS$, where $t$ consists of constructor or
  constructor-like symbols only. If $\footprint{\Gamma}{k}{\typed{t}{\qq}}$, then
  $\Phi(\typed{t\sigma}{\qq}) = \Phi(\typed{\sigma}{\Gamma}) + k$.
\end{lemma}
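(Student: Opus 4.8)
The plan is to proceed by structural induction on the normal form $t$, mirroring the recursive definition of $\Phi$ in Definition~\ref{d:potential}. The key observation is that a footprint derivation $\footprint{\Gamma}{k}{\typed{t}{\qq}}$ for a term built only from constructor or constructor-like symbols is extremely rigid: since weakening rules are excluded, the only applicable rules are $(\mathsf{var})$ (at the leaves, when $t$ is a variable), $(\mathsf{app})$, $(\mathsf{comp})$, and $(\mathsf{share})$. So I would first analyse the shape of such a derivation. In the base case $t = x$ is a variable, the derivation must be a single instance of $(\mathsf{var})$, giving $\Gamma = \typed{x}{\qq}$ and $k = 0$; then $\Phi(\typed{x\sigma}{\qq}) = \Phi(\typed{\sigma}{\Gamma}) + 0$ holds by definition of $\Phi(\typed{\sigma}{\Gamma})$. (The case where $t$ is a nullary constructor is handled by $(\mathsf{app})$ with $n=0$, giving $\Phi = k$ and an empty context, again matching.)

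For the inductive step, write $t = f(t_1,\dots,t_n)$ with $f \in \CS$ or constructor-like. The last rule of the footprint derivation is either $(\mathsf{comp})$ or $(\mathsf{share})$. If it is $(\mathsf{comp})$, then we have $\typed{f}{\atypdcl{\pp_1\times\cdots\times\pp_n}{\qq}{k_0}}$ with $k = \sum_{i=0}^n k_i$ and footprints $\footprint{\Gamma_i}{k_i}{\typed{t_i}{\pp_i}}$ for each $i$; here I should note that each $t_i$ is again a normal form consisting only of constructor/constructor-like symbols, so the induction hypothesis applies and yields $\Phi(\typed{t_i\sigma}{\pp_i}) = \Phi(\typed{\sigma}{\Gamma_i}) + k_i$. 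Summing over $i$, using $\Phi(\typed{f(t_1,\dots,t_n)\sigma}{\qq}) = k_0 + \sum_i \Phi(\typed{t_i\sigma}{\pp_i})$ (Definition~\ref{d:potential}, valid precisely because $f$ is a constructor or constructor-like symbol) and $\Phi(\typed{\sigma}{\Gamma}) = \sum_i \Phi(\typed{\sigma}{\Gamma_i})$ with $\Gamma = \Gamma_1,\dots,\Gamma_n$, gives the claim. If the last rule is $(\mathsf{share})$, we have $\Gamma = \Gamma', \typed{z}{\pp}$ and a footprint for $\typed{t'[x,y]}{\qq}$ in context $\Gamma', \typed{x}{\rr}, \typed{y}{\vecs}$ with $\share{\pp}{\rr,\vecs}$; here one applies the induction hypothesis to the premise (under the substitution extended so that $x,y \mapsto z\sigma$), and then invokes Lemma~\ref{l:1}, which says $\Phi(\typed{z\sigma}{\pp}) = \Phi(\typed{z\sigma}{\rr}) + \Phi(\typed{z\sigma}{\vecs})$, to collapse the two variable entries back into one.

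The main obstacle I anticipate is bookkeeping around $(\mathsf{share})$ and the definition of "consistent with $\Gamma$": I need to be careful that the substitution used in the premise's induction hypothesis is the appropriate extension of $\sigma$ (sending both fresh copies $x,y$ to $z\sigma$, which is again a normalised constructor normal form), and that $\Phi(\typed{\sigma}{\Gamma})$ is computed with respect to the right context on each side. A secondary subtlety is confirming that subterms $t_i$ of a constructor-only normal form are themselves covered by the "only constructor or constructor-like symbols" hypothesis so that the recursive clause of $\Phi$ — rather than the catch-all $\Phi(\typed{v}{\qq}) \defsym 0$ clause — is the one in force throughout; this is immediate since being built from constructor/constructor-like symbols is closed under taking subterms. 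Apart from these points the argument is a routine induction, and the uniqueness and superposition conditions on constructor annotations (used already in Lemma~\ref{l:1}) guarantee that the annotation $\atypdcl{\pp_1\times\cdots\times\pp_n}{\qq}{k_0}$ read off in the $(\mathsf{app})$/$(\mathsf{comp})$ steps is the same one used in the definition of $\Phi$.
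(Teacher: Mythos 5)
Your proposal is correct and follows essentially the same route as the paper's proof, which proceeds by induction on the footprint derivation and works out the (\textsf{app}) and (\textsf{comp}) cases explicitly (the (\textsf{var}) and (\textsf{share}) cases, which you spell out via Lemma~\ref{l:1}, are there dismissed as ``treated similarly''). One minor point: since (\textsf{share}) only renames variables and does not shrink the term, the induction should formally be on the derivation rather than on the structure of $t$ --- which is what your case analysis on the last rule effectively amounts to anyway.
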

\begin{proof}
  Let $\Pi$ denote the derivation of the footprint $\footprint{\Gamma}{k}{\typed{t}{\qq}}$
  and let $t = f(t_1,\dots,t_n)$. We proceed by induction on $\Pi$.
  We restrict our attention to the cases where $\Pi$ amounts to a rule application or
  ends in a \textsf{comp} rule. The other cases are treated similarly.

  Suppose $\Pi$ has the following form, so that $t=f(x_1,\dots,x_n)$.
  \begin{equation*}
    \infer{\tjudge{\typed{x_1}{\pp_1},\dots,\typed{x_n}{\pp_n}}{k}{\typed{f(x_1,\dots,x_n)}{\qq}}}{%
      f \in \CS \cup \DS
      &
      \atypdcl{\pp_1 \times \cdots \times \pp_n}{\qq}{k} \in \FS(f)
    }
    \tpkt
  \end{equation*}
  By assumption the annotation $\typed{f}{\atypdcl{\pp_1 \times \cdots \times \pp_n}{\qq}{k}}$
  is unique. Hence, we obtain:
  \begin{equation*}
    \Phi(\typed{f(x_1,\dots,x_n)\sigma}{\qq}) = k + \Phi(\typed{x_1\sigma}{\pp_1}) + \cdots
  + \Phi(\typed{x_n\sigma}{\pp_n}) = \Phi(\typed{\sigma}{\Gamma}) + k \tkom
  \end{equation*}
  from which the claim follows.

  Suppose $\Pi$ ends in a \textsf{comp} rule and thus has the following form.
  \begin{equation*}
    \infer{\tjudge{\Gamma_1,\dots,\Gamma_n}{k}{\typed{f(t_1,\dots,t_n)}{\qq}}}{%
      \tjudge{\overbrace{\typed{x_1}{\pp_1},\dots,\typed{x_n}{\pp_n}}^{=: \Delta}}{k_0}{\typed{f(x_1,\dots,x_n)}{\qq}}
      &
      \tjudge{\Gamma_i}{k_i}{\typed{t_i}{\pp_i}} \quad \text{for all $i=1,\dots,n$}
    }
    \tpkt
  \end{equation*}
  Wlog.\ $t$ is linear. By induction hypothesis we have for all $i=1,\dots,n$:
  $\Phi(\typed{t_i\sigma}{\pp_i}) = \Phi(\typed{\sigma}{\Gamma_i}) + k_i$. We set
  $\rho \defsym \{x_i \mapsto t_i\sigma \mid i=1,\dots,n\}$. Again by induction hypothesis
  we conclude that $\Phi(\typed{f(x_1,\dots,x_n)\rho}{\qq}) = \Phi(\typed{\rho}{\Delta}) + k_0$.
  Now the claim follows as (i) $\Phi(\typed{t\sigma}{\qq}) = \Phi(\typed{f(x_1,\dots,x_n)\rho}{\qq})$,
  (ii) $\Phi(\typed{\rho}{\Delta}) = \sum_{i=1}^n \left( \Phi(\typed{\sigma}{\Gamma_i}) + k_i \right)$
  and (iii) $k = \sum_{i=0}^n k_i$.
  %
  %

\qed
\end{proof}

We state the following substitution lemma. The lemma
follows by simple induction on~$t$.
\begin{lemma}
  \label{l:4}
  Let $\Gamma$ be a context and let $\sigma$ be a substitution consistent
  with $\Gamma$. Then $\tjudge{\Gamma}{}{\typed{t}{\qq}}$ implies
  $\tjudge{}{}{\typed{t\sigma}{\qq}}$.
\end{lemma}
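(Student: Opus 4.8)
The plan is to prove the statement by induction on the derivation $\Pi$ of $\tjudge{\Gamma}{k}{\typed{t}{\qq}}$ — equivalently, by induction on $t$ with the weakening rules treated as separate cases, as these do not alter $t$. A preliminary observation simplifies the argument: since $\sigma$ is normalised with respect to $\RS \cup \RSS$ and $\Var(t) \subseteq \dom(\sigma)$ (without which $t\sigma$ would not be typable under the empty context), the term $t\sigma$ is ground, and hence every variable context occurring in a derivation of a ground term is introduced solely through $(\mathsf{w_4})$ and can be discarded; in particular, consistency of $\sigma$ with a context then transfers freely to any sub- or super-context that agrees with it on the relevant variables. If one additionally wants to track amortised costs, the same induction yields the sharper $\tjudge{}{k + \Phi(\typed{\sigma}{\Gamma})}{\typed{t\sigma}{\qq}}$, of which the stated lemma is the cost-erased form; I indicate below where Lemma~\ref{l:1} is used for this refinement, but it is not needed for mere existence.

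For the base case $\Pi = (\mathsf{var})$ we have $t = x \in \dom(\sigma)$, $\Gamma = \typed{x}{\qq}$ and $k = 0$; consistency of $\sigma$ with $\Gamma$ applied to $x$ turns the axiom $\tjudge{\Gamma}{}{\typed{x}{\qq}}$ into $\tjudge{\Gamma}{}{\typed{x\sigma}{\qq}}$, hence $\tjudge{}{}{\typed{x\sigma}{\qq}}$ as $x\sigma$ is ground. If $\Pi$ ends in $(\mathsf{app})$, then $t = f(x_1,\dots,x_n)$, $\Gamma = \typed{x_1}{\pp_1},\dots,\typed{x_n}{\pp_n}$ and $\atypdcl{\pp_1 \times \cdots \times \pp_n}{\qq}{k} \in \FS(f)$; I would re-derive $\tjudge{}{}{\typed{f(x_1\sigma,\dots,x_n\sigma)}{\qq}}$ by a single $(\mathsf{comp})$ step whose operator premise is the same $(\mathsf{app})$ instance (on fresh placeholders) and whose $i$-th argument premise $\tjudge{}{}{\typed{x_i\sigma}{\pp_i}}$ is furnished by consistency applied to $x_i$. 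If $\Pi$ ends in $(\mathsf{comp})$ with $t = f(t_1,\dots,t_n)$ and context $\Gamma_1,\dots,\Gamma_n$, then $\sigma$ restricted to $\dom(\Gamma_i)$ is consistent with $\Gamma_i$; applying the induction hypothesis to the $n$ argument subderivations and leaving the operator premise (which mentions no variable of $t$) unchanged, a single $(\mathsf{comp})$ step reassembles the conclusion.

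It remains to pass the weakening rules through the induction. For $(\mathsf{w_1})$ and $(\mathsf{w_3})$ neither $t$ nor, respectively, the result annotation changes, so one applies the induction hypothesis to the premise and re-applies the rule; for $(\mathsf{w_4})$ the weakened variable does not occur in $t$, hence not in $t\sigma$, and the premise is handled directly. The two cases requiring actual care are $(\mathsf{w_2})$ and $(\mathsf{share})$, where the context is reshaped and one must check that a correspondingly adjusted substitution is still consistent with the premise's context: for $(\mathsf{w_2})$, from $\tjudge{}{}{\typed{x\sigma}{\pp}}$ and $\pp \geqslant \rr$ one obtains $\tjudge{}{}{\typed{x\sigma}{\rr}}$ by $(\mathsf{w_3})$; for $(\mathsf{share})$, extending $\sigma$ by $x, y \mapsto z\sigma$ and using $\rr, \vecs \leqslant \pp$ (which holds since $\rr + \vecs = \pp$) yields the needed typings of $z\sigma$, while $t[x,y]$ becomes $t[z,z]$ under the extended substitution, so the induction hypothesis on the premise gives exactly $\tjudge{}{}{\typed{t[z,z]\sigma}{\qq}}$. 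The cost bookkeeping for these two cases invokes the additivity of $\Phi$ over $\share{\pp}{\rr,\vecs}$ and its monotonicity from Lemma~\ref{l:1}. I expect the only genuine obstacle to be this threading of the consistency condition through $(\mathsf{share})$ (together with the $\Phi(\typed{\sigma}{\Gamma})$ accounting if the quantitative form is wanted); the remaining steps are routine.
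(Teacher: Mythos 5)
Your proof is correct and follows essentially the same route as the paper, which dismisses the statement with ``follows by simple induction on $t$'': your induction on the derivation (with the weakening, sharing and grounding details spelled out, and the optional cost-tracking refinement) is exactly the elaboration of that remark. No gaps worth flagging.
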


We establish soundness with respect to relative innermost rewriting.

\begin{theorem}
  \label{t:1}
  Let $\relative{\RS}{\RSS}$ be a resource bounded TRS and
  let $\sigma$ be a normalised such that $\sigma$ is consistent
  with the context $\Gamma$.
  Suppose $\tjudge{\Gamma}{k}{\typed{t}{\pp}}$ and
  $t\sigma \rsinrew{K} u\tau$, $K \in \{0,1\}$ for
  a normalising substitution $\tau$. Then there exists a context $\Delta$ such that
  $\tjudge{\Delta}{\ell}{\typed{u}{\qq}}$ is derivable and
  $\Phi(\typed{\sigma}{\Gamma}) + k - \Phi(\typed{\tau}{\Delta}) - \ell \geqslant K$.
\end{theorem}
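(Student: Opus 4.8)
The plan is to proceed by induction on the derivation $\Pi$ of the judgement $\tjudge{\Gamma}{k}{\typed{t}{\pp}}$, with a case analysis on the last inference rule applied. The weakening rules $(\mathsf{w_1})$--$(\mathsf{w_4})$ and the $(\mathsf{share})$ rule are handled uniformly: each of them changes the context or the cost monotonically in a direction compatible with Lemma~\ref{l:1} (potentials are monotone in the annotation and additive under sharing), so the induction hypothesis applied to the premise yields the claim after adjusting $\Phi(\typed{\sigma}{\Gamma})$ and $k$ by the slack introduced. The $(\mathsf{var})$ case is trivial since a variable is already in normal form, so $K=0$ and we take $\Delta = \Gamma$, $\ell = 0$. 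The genuinely interesting cases are $(\mathsf{app})$ and $(\mathsf{comp})$.

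For the $(\mathsf{app})$ case, $t = f(x_1,\dots,x_n)$ and $t\sigma = f(x_1\sigma,\dots,x_n\sigma)$ with all $x_i\sigma$ normal forms. Either $t\sigma$ is itself a normal form (then $K=0$, nothing to do), or a rule $f(l_1,\dots,l_n)\to r \in \RS\cup\RSS$ applies at the root with some matcher $\mu$, so $u\tau = r\mu$ and $K = \rulecost$. Here I would invoke resource boundedness of the annotation $\atypdcl{\pp_1\times\cdots\times\pp_n}{\qq}{k}\in\FS(f)$: it gives a derivation $\tjudge{\typed{y_1}{\rr_1},\dots,\typed{y_l}{\rr_l}}{k+\ell-\rulecost}{\typed{r}{\qq}}$ where $(\typed{y_j}{\rr_j})_j$ and $\ell$ are freed by the rule. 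The key bookkeeping is to relate the potential freed by deconstructing the left-hand side to the potential actually stored in $\sigma$ on the argument positions. This is exactly what Lemma~\ref{l:3} (via Lemma~\ref{l:2}) delivers: applying Lemma~\ref{l:2} to get the footprint $\footprint{\typed{y_1}{\rr_1},\dots}{\ell}{\typed{c(l_1,\dots,l_n)}{\qq}}$ for the cost-free constructor $c$, and then Lemma~\ref{l:3} with substitution $\mu$, one obtains $\Phi(\typed{c(l_1,\dots,l_n)\mu}{\qq}) = \Phi(\typed{\mu}{\Theta}) + \ell$ where $\Theta = \typed{y_1}{\rr_1},\dots,\typed{y_l}{\rr_l}$. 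Since $f(x_1,\dots,x_n)\sigma$ has its argument potentials summing to $\Phi(\typed{\sigma}{\Gamma}) - k$ (by the $(\mathsf{app})$ shape and Definition~\ref{d:potential}), and these match the argument potentials of $c(l_1,\dots,l_n)\mu$, we get $\Phi(\typed{\sigma}{\Gamma}) - k = \Phi(\typed{\mu}{\Theta}) + \ell - k$, i.e.\ $\Phi(\typed{\mu}{\Theta}) = \Phi(\typed{\sigma}{\Gamma}) - \ell$. Then apply Lemma~\ref{l:4} (substitution lemma) to the resource-bounded judgement with $\mu$ to obtain $\tjudge{}{k+\ell-\rulecost}{\typed{r\mu}{\qq}}$; taking $\Delta = \varnothing$ (or the residual context of $r\mu$), $\ell' = k+\ell-\rulecost$, the required inequality $\Phi(\typed{\sigma}{\Gamma}) + k - \Phi(\typed{\tau}{\Delta}) - \ell' \geqslant \rulecost = K$ reduces to an identity after substituting the computed values, noting $\Phi(\typed{\tau}{\Delta})$ accounts for whatever potential $r\mu$ still carries.

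For the $(\mathsf{comp})$ case, $t = f(t_1,\dots,t_n)$ with linear $t$, premises $\tjudge{\Gamma_i}{k_i}{\typed{t_i}{\pp_i}}$ and $\tjudge{\typed{x_1}{\pp_1},\dots,\typed{x_n}{\pp_n}}{k_0}{\typed{f(x_1,\dots,x_n)}{\qq}}$, with $k = \sum_{i=0}^n k_i$. Since we are under innermost evaluation, a root step is only possible once all $t_i\sigma$ are normal forms; a non-root step either reduces inside some $t_j\sigma$ or, if all are already normal, is itself a root step on the $(\mathsf{app})$-subderivation. In the first subcase, $t_j\sigma \rsinrew{K} t_j'\tau'$ and the induction hypothesis on the $j$-th premise yields a context $\Delta_j$ with $\tjudge{\Delta_j}{\ell_j}{\typed{t_j'}{\pp_j}}$ and the potential inequality; rebuild the $(\mathsf{comp})$ derivation with $t_j$ replaced by $t_j'$, $\Gamma_j$ by $\Delta_j$, $k_j$ by $\ell_j$, leaving the other premises and the operator subderivation untouched, and sum the potential accounts — the loss/gain is confined to the $j$-th slot. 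In the second subcase, all arguments are normalised, so we can set $\rho = \{x_i\mapsto t_i\sigma\}$; by Lemma~\ref{l:3} applied to each premise (the $t_i$ being normal forms built from constructor-like symbols, the relevant hypothesis of the lemma), $\Phi(\typed{t_i\sigma}{\pp_i}) = \Phi(\typed{\sigma}{\Gamma_i}) + k_i$, and then the $(\mathsf{app})$ case already handled applies to $f(x_1,\dots,x_n)$ with substitution $\rho$, whose consistency follows from these potential equalities. Collecting $\Phi(\typed{\sigma}{\Gamma}) = \sum_i \Phi(\typed{\sigma}{\Gamma_i})$ and $k = \sum_{i=0}^n k_i$ against the output of the $(\mathsf{app})$ case closes the argument.

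The main obstacle I anticipate is the precise matching of freed potential against stored potential in the $(\mathsf{app})$/root-step case: one must verify that the linearisation chosen in Definition~\ref{d:freed}, the repeated sharing it encodes, and the definition of $\Phi$ on non-linear left-hand sides all line up so that $\Phi(\typed{\mu}{\Theta})$ equals the argument potential of $t\sigma$ exactly — not merely up to an inequality. This is where the uniqueness and superposition conditions on constructor(-like) annotations are essential (they make $\Phi$ additive and well-defined under the deconstruction), and where Lemmas~\ref{l:1}--\ref{l:3} must be combined carefully; the weakening and sharing cases, by contrast, are routine monotonicity bookkeeping.
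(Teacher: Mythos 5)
Your overall strategy coincides with the paper's: induction over the typing derivation, with the real work in the $(\mathsf{app})$ and $(\mathsf{comp})$ cases, resource boundedness together with Lemmata~\ref{l:2} and~\ref{l:3} to match the freed potential against the potential stored in the substitution at a root step, Lemma~\ref{l:1} for $(\mathsf{share})$, and routine monotonicity for weakening; your key identity $\Phi(\typed{\mu}{\Theta}) = \Phi(\typed{\sigma}{\Gamma}) - \ell$ is exactly the equation the paper derives (your intermediate claim that the argument potentials sum to $\Phi(\typed{\sigma}{\Gamma})-k$ is off --- they sum to $\Phi(\typed{\sigma}{\Gamma})$ --- but the $k$ cancels and your conclusion is right). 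However, two steps as written would fail. In the root-step case you invoke Lemma~\ref{l:4} to obtain $\tjudge{}{k+\ell-\rulecost}{\typed{r\mu}{\qq}}$ and take $\Delta=\varnothing$. Lemma~\ref{l:4} does not preserve the amortised cost: to type the closed term $r\mu$, each occurrence $\typed{y_j}{\rr_j}$ of the freed context must be re-derived for the normal form $y_j\mu$, which by $(\mathsf{app})$/$(\mathsf{comp})$ costs exactly $\Phi(\typed{y_j\mu}{\rr_j})$, so the closed judgement is derivable only with cost $k+\ell-\rulecost+\Phi(\typed{\mu}{\Theta})$, not $k+\ell-\rulecost$. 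The theorem is designed so that you need not close the judgement at all: take $u=r$, $\tau=\mu$ (normalised, by innermost evaluation), $\Delta$ the freed context and the resource-boundedness judgement itself; then the required inequality follows directly from $\Phi(\typed{\sigma}{\Gamma}) = \Phi(\typed{\tau}{\Delta})+\ell$. This is the paper's choice, which you only mention parenthetically.

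Second, in the $(\mathsf{comp})$ case your dichotomy ``the step happens inside one argument, or it is a root step once all arguments are normal'' is sound for plain rewriting but not for one step of the relative relation $\rsirew[\relative{\RS}{\RSS}]$, which is ${\toss{\RS \cup \RSS}_{\RSS}^{\ast}} \cdot {\toss{\RS \cup \RSS}_{\RS}} \cdot {\toss{\RS \cup \RSS}_{\RSS}^{\ast}}$: a single counted step may first normalise arguments with $\RSS$-steps and then fire the $\RS$-step at the root, so the redexes of one step need not sit at a single position. The paper instead decomposes uniformly into possibly empty subderivations $D_i \colon t_i\sigma \rsirew[\relative{\RS}{\RSS}] u_i\tau$ for the arguments together with $D_0$ for $f(x_1,\dots,x_n)\rho$, $\rho=\{x_i \mapsto u_i\}$, applies the induction hypothesis to each, and observes that at most one of the resulting potential inequalities is strict. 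Relatedly, your appeal to Lemma~\ref{l:3} for the normalised arguments needs care: a normal form $t_i\sigma$ may contain a defined symbol that is not constructor-like, in which case its potential is $0$ and only an inequality (not the equality of Lemma~\ref{l:3}) holds --- this is exactly the empty-derivation subcase the paper treats separately, and fortunately the inequality points in the direction you need.
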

\begin{proof}
  Let $\Pi$ denote the derivation of the judgement $\tjudge{\Gamma}{k}{\typed{t}{\qq}}$. The
  proof proceeds by case distinction on derivation $D \colon t\sigma \rsinrew{K} u\tau$ and
  side-induction on $\Pi$.

  Suppose $D$ is empty, that is, $t\sigma$ is a normal form wrt.\ $\RS \cup \RSS$.
  We distinguish two subcases. Either (i)
  $t\sigma$ contains only constructor or constructor-like symbols
  or (ii) $t\sigma$ contains at least one defined function symbol which does not occur
  as argument of the left-hand side of a rule in $\RS \cup \RSS$.

  For subcase (i), it suffices to show that
  $\Phi(\typed{\sigma}{\Gamma}) + k \geqslant \Phi(\typed{v}{\qq})$
  holds even under the assumption that no weakening rules are applied in $\Pi$. However,
  due to Lemma~\ref{l:2}, $\footprint{\Gamma}{k}{\typed{t}{\qq}}$ implies that
  $\Phi(\typed{\sigma}{\Gamma}) + k = \Phi(\typed{v}{\qq})$. Thus the theorem follows.
  Now consider sub-case (ii). By definition $\Phi(\typed{v}{\qq}) = 0$
  and the theorem follows as potentials and amortised costs are non-negative.

  Now suppose $D \colon t\sigma \rsirew[\relative{\RS}{\RSS}] u\tau$, that is $D$ is
  non-empty. We exemplify the proof on three subcases.

  For subcase (i), we assume that $\Pi$ has the following form.
  \begin{equation*}
    \infer{\tjudge{\underbrace{\typed{x_1}{\pp_1},\dots,\typed{x_n}{\pp_n}}_{{} = \Gamma}}{k}{\typed{f(x_1,\dots,x_n)}{\qq}}}{%
      f \in \CS \cup \DS
      &
      \atypdcl{\pp_1 \times \cdots \times \pp_n}{\qq}{k} \in \FS(f)
    }
    \tpkt
  \end{equation*}
  Then $\sigma = \{x_i \mapsto v_i \mid i=1,\dots,n\}$, where $v_i \in \NF$. By
  assumption on $D$ there exists a rule $f(l_1,\dots,l_n) \to r$
  and a normalised substitution $\tau$ such that $f(l_1,\dots,l_n)\tau = t\sigma$ and $t\tau = u$. Wlog.\ ${f(l_1,\dots,l_n)} \in \RS$. As $\relative{\RS}{\RSS}$ is bounded
  there exist variables $y_1,\dots,y_m$, resource annotation $\rr_1,\dots,\rr_m$
  and an amortised costs $\ell$ such that the following judgement is derivable.
  \begin{equation*}
    \tjudge{\overbrace{\typed{y_1}{\rr_1}, \dots, \typed{y_l}{\rr_l}}^{{} =: \Delta}}{k+\ell-1}{\typed{r}{\qq}}
    \tpkt
  \end{equation*}
  Due to Lemmata~\ref{l:2} and~\ref{l:3} we obtain
  $\Phi(\typed{\sigma}{\Gamma}) + k = \Phi(\typed{\tau}{\Delta}) + \ell$.
  The theorem follows.

  For subcase (ii), we assume that $\Pi$ has the following form.
  \begin{equation*}
    \infer{\tjudge{\Gamma_1,\dots,\Gamma_n}{k}{\typed{f(t_1,\dots,t_n)}{\pp}}}{%
      \tjudge{\overbrace{\typed{x_1}{\pp_1},\dots,\typed{x_n}{\pp_n}}^{=: \Delta_0}}{k_0}{\typed{f(x_1,\dots,x_n)}{\pp}}
      &
      \tjudge{\Gamma_i}{k_i}{\typed{t_i}{\pp_i}} \quad \text{for all $i=1,\dots,n$}
    }
    \tpkt
  \end{equation*}
  Wlog.\ $t$ is linear. As $t\sigma \rsirew[\relative{\RS}{\RSS}] u\tau$,
  there exist (potentially empty)
  subderivations $D_i \colon t_i\sigma \rsirew[\relative{\RS}{\RSS}] u_i\tau$
  for all $i=1,\dots,n$.
  We set $D_0 \defsym f(x_1,\dots,x_k)\rho \rsirew[\relative{\RS}{\RSS}] u\tau$.

  By side induction hypothesis we conclude the existence of contexts $\Delta_i$
  and annotations $\qq_i$ such that $\tjudge{\Delta_i}{\ell_i}{\typed{u_i}{\qq_i}}$
  and $\Phi(\typed{\sigma}{\Gamma_i}) + k_i \geqslant \Phi(\typed{\tau}{\Delta_i}) + \ell_i$.
  Further let $\rho \defsym \{x_i \mapsto u_i \mid i=1,\dots,n\}$. Then
  again by induction hypothesis, there exists context $\Delta'$ and annotation
  $\qq'$ such that $\tjudge{\Delta'}{\ell_0}{\typed{u}{\qq'}}$ is derivable
  and
  $\Phi(\typed{\rho}{\Delta}) + k_0 \geqslant \Phi(\typed{\tau}{\Delta'}) + \ell_0$.
  Observe that at most one of the inequalities in the potentials is strict. By distinguishing
  all possible subcases, the theorem follows.

  For subcase (iii), we assume $\Pi$ has the following form.
  \begin{equation*}
    \infer{\tjudge{\Gamma, \typed{z}{\qq}}{k}{\typed{t[z,z]}{\qq}}}{%
      \tjudge{\Gamma, \typed{x}{\rr}, \typed{y}{\vecs}}{k}{\typed{t[x,y]}{\qq}}%
      &
      \share{\pp}{\rr,\vecs}
      &
      \text{$x$, $y$ are fresh}
    }
    \tpkt
  \end{equation*}
  Then the theorem follows from the side induction hypothesis in conjunction with Lemma~\ref{l:1}.
\qed
\end{proof}

The next corollary is an immediate consequence of the theorem, highlighting the
connection to similar soundness results in the literature.
\begin{corollary}
  \label{c:1}
  Let $\relative{\RS}{\RSS}$ be a bounded TRS and
  let $\sigma$ be a normalising substitution consistent
  with the context $\Gamma$.
  Suppose $\tjudge{\Gamma}{k}{\typed{t}{\qq}}$ and
  $D \colon t\sigma \rsibrew[\relative{\RS}{\RSS}] v \in \NF$. Then
  (i) $\tjudge{}{}{\typed{v}{\qq}}$ and
  (ii) $\Phi(\typed{\sigma}{\Gamma}) - \Phi(\typed{v}{\qq}) + k  \geqslant \len{D}$
  hold.
\qed
\end{corollary}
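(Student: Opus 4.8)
The plan is to derive the corollary from Theorem~\ref{t:1} by induction on the length $\len{D}$ of the derivation $D \colon t\sigma \rsibrew[\relative{\RS}{\RSS}] v$. The key observation is that $\rsibrew[\relative{\RS}{\RSS}]$ is the reflexive--transitive closure of $\rsirew[\relative{\RS}{\RSS}]$, so each rewrite step in $D$ corresponds to an application of Theorem~\ref{t:1} with $K=1$, while the case of the empty derivation is the $K=0$ case of the theorem (in which $t\sigma = v$ is already a normal form). For the base case $\len{D}=0$, we have $t\sigma = v \in \NF$, and applying Theorem~\ref{t:1} with $K=0$ yields a context $\Delta$ with $\tjudge{\Delta}{\ell}{\typed{v}{\qq}}$ and $\Phi(\typed{\sigma}{\Gamma}) + k - \Phi(\typed{\tau}{\Delta}) - \ell \geqslant 0$; since $v$ is a normal form and substitutions here are trivial (identity on $v$'s variables), Lemma~\ref{l:4} together with non-negativity of $\Phi$ and $\ell$ collapses this to $\tjudge{}{}{\typed{v}{\qq}}$ and $\Phi(\typed{\sigma}{\Gamma}) - \Phi(\typed{v}{\qq}) + k \geqslant 0 = \len{D}$.

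For the inductive step, suppose $D \colon t\sigma \rsirew[\relative{\RS}{\RSS}] u\tau \rsibrew[\relative{\RS}{\RSS}] v$, where the first step contributes $K=1$ to the count (it lies in the strict part $\RS$, since by the definition of $\rsirew[\relative{\RS}{\RSS}]$ exactly one $\RS$-step occurs, flanked by $\RSS$-steps — so here I would actually fold the surrounding $\RSS$-steps into the single application of Theorem~\ref{t:1}, which already handles the relative rewrite relation). Theorem~\ref{t:1} gives a context $\Delta$ with $\tjudge{\Delta}{\ell}{\typed{u}{\qq}}$ and $\Phi(\typed{\sigma}{\Gamma}) + k - \Phi(\typed{\tau}{\Delta}) - \ell \geqslant 1$. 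Since $\tau$ is normalising and consistent with $\Delta$ (this needs a brief check, but it is part of what Theorem~\ref{t:1} delivers), the induction hypothesis applies to the shorter derivation $u\tau \rsibrew[\relative{\RS}{\RSS}] v$, yielding $\tjudge{}{}{\typed{v}{\qq}}$ and $\Phi(\typed{\tau}{\Delta}) - \Phi(\typed{v}{\qq}) + \ell \geqslant \len{D} - 1$. Adding the two inequalities telescopes the intermediate potential $\Phi(\typed{\tau}{\Delta})$ and the intermediate cost $\ell$, giving $\Phi(\typed{\sigma}{\Gamma}) - \Phi(\typed{v}{\qq}) + k \geqslant \len{D}$, which is claim~(ii); claim~(i) is immediate from the induction hypothesis.

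The main obstacle I anticipate is the bookkeeping around consistency and normalisation of the intermediate substitution $\tau$: one must verify that the $\Delta$ and $\tau$ produced by Theorem~\ref{t:1} satisfy the hypotheses needed to reinvoke the theorem (or the induction hypothesis) on the tail of the derivation, i.e. that $\tau$ is again normalised with respect to $\RS \cup \RSS$ and consistent with $\Delta$. This is largely a matter of tracing the construction in the proof of Theorem~\ref{t:1}, but it is the one place where the argument is not purely mechanical. Everything else — the telescoping of potentials and the additivity of the $K$-counts — is routine arithmetic over $\Qplus$, using only non-negativity of $\Phi$ and the amortised costs.
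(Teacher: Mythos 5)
Your proposal is correct and is exactly the argument the paper has in mind: the paper states Corollary~\ref{c:1} as an immediate consequence of Theorem~\ref{t:1}, and the implicit proof is precisely your induction on $\len{D}$, applying the theorem once per step and telescoping the intermediate potentials and amortised costs (with the empty-derivation case handled by the $K=0$ instance and Lemma~\ref{l:4}). The bookkeeping you flag about the intermediate substitution being normalised and consistent with the produced context is indeed what the theorem's statement supplies, so there is no gap.
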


The next theorem defines suitable constraints on the resource annotations
to deduce polynomial innermost runtime from Theorem~\ref{t:1}. Its proof
follows the pattern of the proof of Theorem~14 in~\cite{HM:2014}.

\begin{theorem}
  \label{t:2}
  Suppose that for each constructor $c$ with
  $\atypdcl{\pp_1 \times \cdots \times \pp_n}{\qq}{k'} \in \FS(c)$, there exists
  $\rr_i \in \Vecs$ such that $\pp_i \leqslant \qq + \rr_i$ where
  $\max\rr_i \leqslant \max \qq=:r$ and $k' \leqslant r$ with
  $\len{\rr_i} < \len{\qq}=:k$. Then $\Phi(\typed{v}{\qq}) \leqslant r\len{v}^k$,
  and thus the innermost runtime complexity of the TRS under investigation is in
  $\mathcal{O}(n^{k})$.
\end{theorem}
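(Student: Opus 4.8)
The plan is to prove the two assertions of the theorem in turn: first the pointwise bound $\Phi(\typed{v}{\qq}) \leqslant r\len{v}^k$ on the potential of any normal form, and then derive the runtime bound from it via Theorem~\ref{t:1} (more precisely, Corollary~\ref{c:1}).

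\textbf{The potential bound.}
First I would prove $\Phi(\typed{v}{\qq}) \leqslant r \cdot \len{v}^{k}$ by induction on the structure of $v$, where $r = \max\qq$ and $k = \len{\qq}$. If $v$ contains a defined symbol that is not constructor-like, then $\Phi(\typed{v}{\qq}) = 0$ by Definition~\ref{d:potential} and there is nothing to show. So assume $v = c(v_1,\dots,v_n)$ is built from constructors and constructor-like symbols, with $\atypdcl{\pp_1 \times \cdots \times \pp_n}{\qq}{k'} \in \FS(c)$ the unique such annotation. By hypothesis there are $\rr_i$ with $\pp_i \leqslant \qq + \rr_i$, $\max\rr_i \leqslant r$, $\len{\rr_i} < k$, and $k' \leqslant r$. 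Unfolding the definition of potential and using Lemma~\ref{l:1} (monotonicity in the annotation, then additivity for the sharing $\qq + \rr_i$),
\begin{equation*}
  \Phi(\typed{v}{\qq}) = k' + \sum_{i=1}^n \Phi(\typed{v_i}{\pp_i})
  \leqslant k' + \sum_{i=1}^n \bigl(\Phi(\typed{v_i}{\qq}) + \Phi(\typed{v_i}{\rr_i})\bigr) \tpkt
\end{equation*}
To the first summand I apply the induction hypothesis with annotation $\qq$ (same $r$, same $k$), giving $\Phi(\typed{v_i}{\qq}) \leqslant r\len{v_i}^{k}$; to the second I apply it with annotation $\rr_i$, noting $\max\rr_i \leqslant r$ and $\len{\rr_i} \leqslant k-1$, giving $\Phi(\typed{v_i}{\rr_i}) \leqslant r\len{v_i}^{k-1}$. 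Hence $\Phi(\typed{v}{\qq}) \leqslant r + \sum_{i=1}^n r\bigl(\len{v_i}^{k} + \len{v_i}^{k-1}\bigr)$, and since $\len{v} = 1 + \sum_{i=1}^n \len{v_i}$, a purely arithmetic estimate (the kind of binomial/multinomial comparison used in the proof of Theorem~14 in~\cite{HM:2014}) shows this is bounded by $r\len{v}^{k}$. The base case $n=0$ is $\Phi(\typed{v}{\qq}) = k' \leqslant r = r\cdot 1^{k}$.

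\textbf{From the potential bound to runtime complexity.}
Now fix a basic term $t = f(u_1,\dots,u_n)$ with $\size{t} \leqslant N$ that starts a maximal innermost $\relative{\RS}{\RSS}$-derivation $D$. Since $\relative{\RS}{\RSS}$ is resource bounded and normalised substitutions are assumed, I may pick a derivable judgement for $f$ with its declared annotation, instantiate $t$ as $f(x_1,\dots,x_n)$ with context $\typed{x_1}{\pp_1},\dots,\typed{x_n}{\pp_n}$ and the substitution $\sigma = \{x_i \mapsto u_i\}$, which is consistent and normalised because the $u_i$ are ground normal forms. By Corollary~\ref{c:1}, $\len{D} \leqslant \Phi(\typed{\sigma}{\Gamma}) - \Phi(\typed{v}{\qq}) + k_0 \leqslant \Phi(\typed{\sigma}{\Gamma}) + k_0$, since potentials are non-negative. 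Now $\Phi(\typed{\sigma}{\Gamma}) = \sum_{i=1}^n \Phi(\typed{u_i}{\pp_i})$, and the potential-bound part of the theorem applies to each $u_i$ with its constant $r_i = \max\pp_i$ and length $\len{\pp_i} \leqslant k$, so $\Phi(\typed{u_i}{\pp_i}) \leqslant r_i \len{u_i}^{k} \leqslant r_i N^{k}$; summing and absorbing the finitely many constants (the $r_i$ and $k_0$, which depend only on the fixed signature, not on $N$) gives $\len{D} = O(N^{k})$. As $D$ and $t$ were arbitrary among basic terms of size $\leqslant N$, this is exactly $\rc_{\RS}(N) = O(N^{k})$.

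\textbf{Main obstacle.}
The routine but slightly delicate point is the arithmetic step $r + \sum_i r(\len{v_i}^{k} + \len{v_i}^{k-1}) \leqslant r\len{v}^{k}$: one has to see that the ``overhead'' terms $\len{v_i}^{k-1}$ and the constant $r$ are absorbed by the cross-terms in the expansion of $(1 + \sum_i \len{v_i})^{k}$, which requires $k \geqslant 1$ and a careful bookkeeping of the multinomial coefficients — this is precisely where the condition $\len{\rr_i} < \len{\qq}$ is used, ensuring the recursive contributions drop in degree. A secondary subtlety is checking that the induction hypothesis may legitimately be applied with a \emph{different} annotation $\rr_i$ on the same subterm $v_i$; this is fine because the statement is universally quantified over the annotation, and the side conditions ($\max\rr_i \leqslant r$, $\len{\rr_i} \leqslant k-1$) are exactly what is needed to keep the same $r$ and a strictly smaller exponent. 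Everything else is bookkeeping with Lemma~\ref{l:1} and Corollary~\ref{c:1}.
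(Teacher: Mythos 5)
Your proof is correct and follows essentially the same route as the paper: a structural induction on $v$ using Lemma~\ref{l:1} for monotonicity and additivity, the induction hypothesis applied both with $\qq$ and with the lower-degree annotations $\rr_i$, and the multinomial estimate to absorb the overhead terms into $r\len{v}^k$. The paper leaves the step from the potential bound to the $\mathcal{O}(n^k)$ runtime bound implicit (it follows from Theorem~\ref{t:1}/Corollary~\ref{c:1} exactly as you spell out), so your more explicit second part matches its intent; the only detail it treats separately is the degenerate case $k=0$, where $\Phi(\typed{v}{\qq})=0$.
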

\begin{proof}
  We proceed by induction on $v$. Observe that if $k=0$ then
  $\Phi(\typed{v}{\qq})=0$. Otherwise, we have
  \begin{align*}
    \Phi(\typed{c(v_1,\dots,v_n)}{\qq}) & \leqslant
                                          r + \Phi(\typed{v_1}{\pp_1}) + \cdots + \Phi(\typed{v_n}{\pp_n})
    \\
                                        & \leqslant
                                          r+\Phi(\typed{v_1}{\qq +\rr_1})+\cdots+\Phi(\typed{v_n}{\qq+\rr_n})
    \\
                                        & =
                                          r+\Phi(\typed{v_1}{\qq}) + \Phi(\typed{v_1}{\rr_1})+\cdots+\Phi(\typed{v_n}{\qq})+
                                          \Phi(\typed{v_n}{\rr_n})
    \\
                                        & \leqslant r(1+\len{v_1}^k+\len{v_1}^{k-1}+\dots+\len{v_n}^k+\len{v_n}^{k-1})
                                          \tkom
  \end{align*}
  where we have applied the induction hypothesis in conjunction with
  Lemma~\ref{l:1}.
  The last expressed is bounded by
  $r(1+\len{v_1}+\dots+\len{v_n})^k=r\len{v}^k$ due to the multinomial theorem.
  \qed
\end{proof}

We note that our running example satisfies the premise of Theorem~\ref{t:2}.
Thus the linear bound on the innermost runtime complexity
of the running example $\RSa$ follows.
The next example clarifies that without further assumptions
potentials are not restricted to polynomials.

\begin{example}
  Consider that we annotate the constructors for natural numbers as
  $\typed{\zero}{\atypdcl{}{\vec{p}}{0}}$ and
  $\typed{\mS}{\atypdcl{2\vec{p}}{\vec{p}}{p_1}}$,
  where $\vec{p} = (p_1,\dots,p_k)$.
  We then have, for example, $\Phi(\typed{t}{1})=2^{v}-1$, where $v$ is the value
  represented by $t$.
\qed
\end{example}

\section{Implementation}
\label{Implementation}

In this section we describe the details of important implementation issues. The
realisations of the presented method can be seen twofold. On one hand we have a
standalone program which tries to directly annotate the given TRS.~While on the
other hand the integration into \tct~\cite{avanzini2016tct} uses relative
rewriting. Clearly, as an integration into \tct~was planned from the beginning,
the language used for the implementation of the amortised resource analysis module
is Haskell\footnote{See \url{http://haskell.org/}.}.

The central idea of the implementation is the collection of all signatures and
arising constraints occurring in the inference tree derivations. To guarantee
resource boundedness further constraints are added such that uniqueness and
superposition of constructors (cf.~Section~\ref{Resources}) is demanded and
polynomial bounds on the runtime complexity are guaranteed (cf.~Theorem~\ref{t:2}).


\paragraph*{Inference Tree Derivation and Resource Boundedness}

To be able to apply the inference rules the expected root judgement of each rule is generated (as in
Example~\ref{ex:3}) by the program and the inference rules of Figure~\ref{fig:2} are applied. To
gain determinism the inference rules are ordered in the following way. The \textsf{share}-rule has
highest priority, followed by \textsf{app}, \textsf{var}, \textsf{comp} and $\mathsf{w_{4}}$. In
each step the first applicable rule is used while the remaining weakening rules $\mathsf{w_{1}}$,
$\mathsf{w_{2}}$ and $\mathsf{w_{3}}$ are integrated in the aforementioned ones. For each
application of an inference rule the emerging constraints are collected.

To ensure monomorphic typing of function signatures we keep track of a list of signatures. It uses
variables in lieu of actual vectors. For each signature occurrence of defined function symbols the
system refers to the corresponding entry in the list of signatures. Therefore, for each defined
function symbol only one signature is added to the list of signatures. If the function occurs
multiple times, the same references are used. Unlike defined function symbols multiple signature
declarations of constructors are allowed, and thus each occurrence adds one signature to the list.

For the integration into \tct~we utilise the relative rewriting formulation. Instead of requiring
all strict rules to be resource bounded, we weaken this requirement to have at least one strict rule
being actually resource bounded, while the other rules may be annotated cost-free resource bounded.
The SMT solver chooses which rule will be resource bounded. Clearly, this eases the constraint
problem which is given to the SMT solver.

\paragraph*{Superposition of Constructors}

Recall that constructor and constructor-like symbols $f$ must satisfy the superposition principle.
Therefore, for each annotation $\atypdcl{\vec{p_1} \times \cdots \times \vec{p_n}}{\vec{q}}{k}$ of
$f$ it must be ensured that there is no annotation
$\atypdcl{\lambda \cdot \vec{p_1} \times \cdots \times \lambda \cdot \vec{p_n}}{\vec{q'}}{\lambda
  \cdot k}$ with $\lambda \in \Qplus$ and $q \neq \lambda \cdot q'$ in the corresponding set of
annotated signatures. Therefore, for every pair $(q,q')$ with $q' \geqslant q$ and $q > 0$ either
for every $\lambda > 0:~q' \neq \lambda \cdot q$ or if $q' = \lambda \cdot q$ then the annotation
must be of the form
$\atypdcl{\lambda \cdot \vec{p_1} \times \cdots \times \lambda \cdot \vec{p_n}}{\vec{\lambda \cdot
    q}}{\lambda \cdot k}$.

A naive approach is adding corresponding constraints for every pair of return annotations of a
constructor symbol. This leads to universal quantifiers due to the scalar multiplication, which
however, are available as binders in modern SMT solvers~\cite{BarFT-RR-17}. Early experiments
revealed their bad performance. Overcoming this issue using Farkas' Lemma~\cite{farkas1902theorie}
is not possible here. This is due to the fact that a scaling variable for each annotation in every
pair of annotations has to be introduced. Note that the number of constraints increase exponentially
by requiring constraints over pairs of annotations. Thus, we developed a heuristic of spanning up a
vector space using unit vectors for the annotation of the return types for each constructor. Each
annotated signature of such a symbol must be a linear combination of the base signatures.

Both methods, universal quantifiers and base signatures lead to non-linear constraint problems.
However, these can be handled by some SMT solvers\footnote{We use the SMT Solvers z3
  (\url{https://github.com/Z3Prover/z3/wiki}) and MiniSmt
  (\url{http://cl-informatik.uibk.ac.at/software/minismt/}).}. Thus, in contrast to the techniques
presented in~\cite{Hoffmann:2011,HAH12b,HDW:2017}, which restrict the potential function to
predetermined data structures, like lists or binary trees, our method allows any kind of data
structure to be annotated.

\begin{example}
  \label{ex:basectrs}
  For instance consider the base constructors
  $\cons_{1} \colon \atypdcl{(0,0) \times (1,0)}{(1,0)}{1}$ and
  $\cons_{2} \colon \atypdcl{(0,0) \times (2,1)}{(0,1)}{1}$ for a constructor
  $\cons$. An actual instance of an annotated signature is
  $n_{1} \cdot \cons_{1} +~n_{2} \cdot \cons_{2}$ with $n_{1},n_{2} \in \N$. As the
  return types can be seen as unit vectors of a Cartesian coordinate system the
  superposition and uniqueness properties hold.
\qed
\end{example}

\paragraph*{Cost-Free Function Symbols}

Inspired by Hoffmann~\cite[p.93ff]{Hoffmann:2011} we additionally implemented a
cost-free inference tree derivation when searching for non-linear bounds. The idea
is that for many non-tail recursive functions the freed potential must be the one
of the original function call plus the potential that gets passed on.

The inference rules are extended by an additional \textsf{app}-rule, which
separates the function signature into two parts, cf. Figure~\ref{fig:costfree}.
On the left there are the monomorphic and cost-free signatures while on the right a
cost-free part is added. For every application of the rule the newly generated
cost-free signature annotation must be cost-free resource bounded, for this the
cost-free type judgement indicated has to be derived for any rule
$f(l_1,\dots,l_n)~\to~r$ and freed context
$\typed{y_1}{\rr_1}, \dots, \typed{y_l}{\rr_l}$ and cost $\ell$.
Thus, the new set of annotations for a defined function symbols $f$ is given by
the following set, cf.~\cite[p.~93]{Hoffmann:2011}.
\begin{equation*}
  \{ \atypdcl{\pp_{1} + \lambda \cdot \ppcf_{1} \times \cdots \times
  \pp_{n} + \lambda \cdot \ppcf_{n}}{\qq + \lambda \cdot \qqcf}{k + \lambda \cdot
  \kcf}\ |\ \lambda \in \Qplus, \lambda \geqslant 0 \}
\tpkt
\end{equation*}
%
%
The decision of which \textsf{app} rule is
applied utilises the strongly connected component (SCC) of the call graph analysis
as done in~\cite[p.93ff]{Hoffmann:2011}.

\begin{figure}[t]
  \centering
  \begin{tabular}{c}
    $\infer{\tjudge{\typed{x_1}{\pp_1},\dots,\typed{x_n}{\pp_n}}{k}{\typed{f(x_1,\dots,x_n)}{\qq}}}{%
      \begin{minipage}[b]{40ex}
      $\atypdcl{\ppcf_{1} \times \cdots \times \ppcf_{n}}{\qqcf}{\kcf} \in \FSCF(f)\\$
      $\atypdcl{\pp_1 \times \cdots \times \pp_n}{\qq}{k} \in \FS(f)$
      \end{minipage}
    &
      \begin{minipage}[b]{40ex}
        \centering
        $\tjudge{\typed{y_1}{\rr_1}, \dots,
          \typed{y_l}{\rr_l}}{k+\ell}{\typed{r}{\qq}}$\\[.5ex]
      \end{minipage}
    }$
  \end{tabular}
  \caption{Additional \textsf{app} Rule for Cost-Free Derivation, where $f \in \CS \cup \DS$. }
  \label{fig:costfree}
\end{figure}

Suppose $f$ is the function being
analysed. Whenever, a currently deriving function $g$, where $g$ is not
constructor-like, resides in the SCC of $f$ and the current derivation is not the
cost-free analysis of $f$, the cost-free application rule of
Figure~\ref{fig:costfree} is used. In all other cases, the original function
application rule of Figure~\ref{fig:2} is used.

We have experimented with this heuristic, by pruning the need of the SCC
requirement, such that the cost-free inference rule for application is not just
allowed when $g$ resides in the SCC of $f$, but also when $g$ is reachable
from $f$ in the call graph. This adaption however, increased the constraint problem
size tremendously on one hand which obviously resulted in longer execution times
but on the other hand could only infer one new cubic example and move one example
from a cubic polynomial upper bound to quadratic. This makes sense, as the
motivation behind this extension lies in the restricted feasibility of
monomorphically annotations of recursive calls.

The implementation of the cost-free derivations uses a new set of signatures for which defined
function symbol and constructor symbol signatures can occur multiple times. However, as above for
each inference tree derivation only a single instance of a signature for each defined function
symbol may be used. 

\paragraph*{Alternative Implementation of the Superposition Principle (Heuristics)}

Similar to~\cite{Hoffmann:2011,HM:2014} we integrated the additive shift $\shift(\vec{p})$
and interleaving $\vec p\interleave \vec q$ for constructors when type information is given.
Here $\shift (p_1,\dots,p_k) \defsym (p_1+p_2,p_2+p_3,\dots, p_{k-1} + p_k, p_k)$
and $\vec p\interleave \vec q \defsym (p_1,q_1,p_2,q_2,$ $\dots,p_k,q_k)$, where
the shorter of the two vectors is padded with $0$s.
These heuristics are designed such that the superposition principle holds,
without the need of base annotations. Therefore, the constraint problem automatically becomes linear
whenever these heuristics are used which tremendously reduces the execution
times.


\section{Experimental Evaluation}
\label{Experiments}

\begin{figure}[t]
  \footnotesize
  \centering
\begin{tabular}{p{0.29\textwidth} | x{0.08\textwidth} x{0.08\textwidth} x{0.08\textwidth}
    x{0.08\textwidth} x{0.08\textwidth} x{0.08\textwidth} }
     Testbed: TPDB & \(\bO(1)\) & \(\bO(n^{1})\) & \(\bO(n^{2})\) & \(\bO(n^{3})\) & \(\bO(n^{\geqslant4})\) & Fail\\
    \hline
    \tct{} with ARA & 44 & 211 & 71 & 27 & 10 & 648\\
    \tct{} without ARA & 44 & 210 & 68 & 28 & 8 & 653 \\
    Standalone ARA & 14 & 105 & 30 & 0 & 0 & 862 \\
    Standalone ARA with CF & 14 & 105 & 46 & 8 & 0 & 838\\
    \aprove{} & 53 & 321 & 127 & 35 & 12 & 463\\
    \hline
    \tct{} with ARA & 3.55 & 9.87 & 17.32 & 26.02 & 43.39 & 50.01 \\
    \tct{} without ARA & 3.13 & 9.90 & 14.48 & 24.85 & 40.46 & 49.79 \\
    Standalone ARA & 0.10 & 0.24 & 0.81 & 0.00 & 0.00 & 2.44 \\
    Standalone ARA with CF & 0.15 & 1.00 & 8.65 & 10.94 & 0 & 18.54\\
    \aprove{} & 2.03 & 34.43 & 151.77 & 190.17 & 199.65 & 252.52 \\
  \hline
  \end{tabular}
  \caption{Experimental evaluation of \tct{}~with ARA, \tct{}~without ARA and standalone ARA with
    and without cost-free (CF) derivations on the TPDB for number of instances (top) and required
    time in seconds (bottom).}
  \label{fig:res-tpdb}
\end{figure}

\begin{figure}[t]
  \footnotesize
  \centering
  \begin{tabular}{p{0.29\textwidth} | x{0.08\textwidth} x{0.08\textwidth} x{0.08\textwidth}
    x{0.08\textwidth} x{0.08\textwidth} x{0.08\textwidth} }
     Testbed: TRS programs  & \(\bO(1)\) & \(\bO(n^{1})\) & \(\bO(n^{2})\) & \(\bO(n^{3})\) & \(\bO(n^{\geqslant4})\) & Fail\\
    \hline
    \tct{} with ARA & 2 & 45 & 24 & 8 & 2 & 35\\
    \tct{} without ARA & 2 & 42 & 26 & 7 & 2 & 37 \\
    Standalone ARA & 1 & 43 & 2 & 0 & 0 & 70 \\
    Standalone ARA with CF & 1 & 43 & 14 & 1 & 0 & 57 \\
    \hline
    \tct{} with ARA & 0.05 & 1.92 & 10.10 & 27.20 & 51.65 & 15.11 \\
    \tct{} without ARA & 0.02 & 1.75 & 8.81 & 21.83 & 42.76 & 15.58 \\
    Standalone ARA & 0.04 & 0.23 & 0.14 & 0.00 & 0.00 & 0.25 \\
    Standalone ARA with CF & 0.08 & 1.56 & 8.97 & 1.21 & 0.00 & 54.91 \\
    \hline
  \end{tabular}
  \caption{Experimental evaluation of \tct{}~with ARA, \tct{}~without ARA and
    standalone ARA with and without cost-free (CF) derivations on the TRS testbed for number of
    instances (top) and required time in seconds (bottom).}
  \label{fig:res-testbed}
\end{figure}


In this section we evaluate the prototype implementation of the amortised analysis and how it deals
with some selected examples including the paper's running example \textsf{queue}. All
experiments\footnote{Detailed data is available at
  \url{http://cl-informatik.uibk.ac.at/software/tct/experiments/ara\_worstcase/.}} were conducted on a
machine with an Intel(R) Core(TM) i7-7700 CPU @ 3.60GHz and 16GB RAM. The timeout was set to 60
seconds.
The evaluation is split in the core application of the prototype on first-order TRSs as well as the
assessment of the combination of the provided prototype in connection with the \hoca{}
transformation. The latter is able to verify the worst-case bounds of higher-order functional
programs fully automatically.

Thus for benchmarking we first elaborated the amortised resource analysis (ARA) directly on the
TRSs. As these tests directly evaluate the applicability and functionality of ARA for TRSs we
provide results for two separate testbeds. First we utilize the
runtime-complexity-innermost-rewriting folder of the TPDB\footnote{We refer to Version~10.4 of the
  Termination Problem Database, available
  from~\url{http://cl2-informatik.uibk.ac.at/mercurial.cgi/TPDB}.} which includes 1011 problems, and
second a collection consisting of 116 TRSs representing first-order functional
programs~\cite{Glenstrup:1999,Frederiksen:2002}, transformations from higher-order
programs~\cite{ADM15,ADM:2018}, or~\raml\ programs~\cite{HAH12b} and interesting examples from the
TPDB. We will refer to the later as TRS testbed, while the former is called TBDB testbed for which
we use as a comparison the results of \aprove{}\footnote{See
  \url{https://aprove-developers.github.io/trs_complexity_via_its/} for detailed results of \aprove.
  Timeout: 300 seconds, Intel Xeon with 4 cores at 2.33 GHz and 16GB of RAM.}. For both testbeds we
compared the competition version of \tct{} without the amortised resource analysis (ARA) to a
version where we integrated ARA. The integration into \tct{} is configured to not use any cost-free
derivations (CF) as the toolchain of \tct{} already simplifies the problems such that the additional
constraints become an almost always useless overhead. Thus to provide insights in the cost-free
derivation mechanism we compare the results of standalone ARA with and without the cost-free
derivations as well.

Figure~\ref{fig:res-tpdb} presents the experimental evaluation on the TPDB testbed. As expected ARA
can solve less examples than \tct{} which makes sense as it includes several techniques like
polynomial interpretations, matrix interpretations~\cite{MSW08} or dependency
pairs~\cite{HM:2008,arts2000termination}. Nonetheless ARA, which is able to infer univariate bounds
only, can solve a reasonable amount of examples. It can be seen that ARA with cost-free derivations
can solve more examples in the quadratic and cubic case to the expense of increased running time.
Recall that cost-free derivations are important for non-tail recursive functions. This extension to
ARA increases the applicability of the analysis tremendously. For instance in the quadratic case the
prototype is able to increase the solved instances by more than 50\%. However, also note that the
increase of running time for ARA with cost-free derivations as compared to ARA without cost-free
derivations increases non-linearly. This is due to the additional constraints introduced. As can be
seen in the table \aprove{} is able to solve most instances, where the time required to solve the
problems increases with increasing complexity classes.

When comparing \tct{} with ARA and \tct{} without ARA one can see that $5$ more examples can be
solved. This includes examples from the $\raml{}$ folder, like \textsf{quicksort},
\textsf{bitvectors} and \textsf{depth/breadth first search}. Furthermore, sometimes better bounds
can be inferred with integrated ARA. For instance \tct{} without ARA infers a quadratic bound for
the example \textsf{\#3.42} which computes the binary representation of numbers, while with ARA the
correct linear bound can be inferred.
However, by adding this additional method to the toolchain of \tct{} we observed a slightly
increased average running time. Nonetheless the actual running time is closely problem related. To
exemplify, for the above mention program \textsf{\#3.42} the running time decreases by almost a
factor of $10$ when ARA is added (from $21.45s$ to $2.61s$), while for the example
\textsf{appendAll} the running time increases by a factor of $2.5$ (from $1.05s$ to $2.72s$).

Figure~\ref{fig:res-testbed} provides the results of the experimental evaluation on the TRS testbed.
Again the number of solved instances increases by allowing cost-free derivations as can be seen on
the standalone ARA results. However, the running times increase as well. Furthermore, it can be seen
that standalone ARA is able to infer more examples with linear worst-case bounds as \tct{} without
ARA. However, for higher polynomial bounds the toolchain of \tct{} is required. As the impact of ARA
is higher on this testbed we can see that the methods works best in the setting of program
translations. \tct{} with ARA can solve the examples \textsf{bfs.raml} and \textsf{decrease} as
opposed to \tct{} without ARA. Furthermore, better bounds for \textsf{\#3.42} and \textsf{queue} can
be found. In terms of running times we again observed a slight increase when ARA was added to
\tct{}.

For the second type of evaluation we collected combination of selected $\OCaml$ programs from both
the $\hoca$~\cite{ADM15} and the \raml{} testbed, complemented with interesting examples. Overall
the testbed includes $48$ programs. For this evaluation category we analyse the capabilities of ARA
in connection with the \hoca{} transformation and compare the results to \raml{}, which uses an
amortised analysis based on the $\OCaml$ code directly. The results of the evaluation are summarised
in Figure~\ref{fig:res-hoca}. To provide further insights we tested the method in various
combinations. First we used the tool \tct{} $\hoca{}$, which translates the input $\OCaml$ program
to a TRS and then subsequently calls \tct{}. However, the type information from the $\OCaml$ code is
lost during the translation process as the \tct{} analysis in the current form is type-less. Again
we evaluate the tool with and without the integration of ARA. Nonetheless, as the type information
can be utilized by the heuristics we integrated the transformation process into ARA as well and
compare the results of standalone ARA on the (pre-translated) TRS without type information (TRS) to
the results of standalone ARA with the integration of the translation and thus with type information
(HOCA). Finally we compare the results to the latest version of \raml{} (of July 2018). Recall that
\raml{} is able to infer multivariate bounds, while ARA is built for univariate bounds only.

When evaluating the standalone ARA version we observed no difference in terms of solved instances.
The only difference provided in the table is one additional solved instance for standalone ARA on
the TRS. This is due to the reason that the integrated transformation takes more than 60 seconds
(about $300s$) and thus the tool runs into a timeout. Without timeout the same number of solved
instances are gained. This makes sense, as the heuristics only reduce the constraints to a linear
problem (LP), while not providing any strength. However, in most cases the running time decreases
when type information is available as the constraint problem can be solved without backtracking
mechanism. This can be seen best in the column for quadratically inferred worst-case bounds, where
the running time decreases by almost $10s$. Furthermore, as ARA is based on base vectors it always
tries to improve the quality of the solution by using non-linear constraint solving while \raml{}
relies on heuristics only. This explains the huge gap of running time between \raml{} and standalone
ARA. Finally when looking at the \tct{} $\hoca{}$ results we observe that the version with ARA can
find a better bound for the example \textsf{calculator.raml}. Furthermore, for this testbed \tct{}
$\hoca{}$ provides comparable results to \raml{}. However, the solved instances are different to a
great extend. For instance, while all tools but \raml{} are able to solve examples like
\textsf{avanzini\_original.hoca}, \textsf{btree\_lookup.hoca}, or \textsf{flip.hoca}, only \raml{}
is able to provide bounds for $13$ examples including \textsf{quicksort.raml},
\textsf{bigints\_add.raml}, or \textsf{isort.raml} (see detailed results). These results are due to
several reasons. First \raml{} does not support the user-defined type in \textsf{btree\_lookup.hoca}
as discussed above. Then \raml{} does not support lazy evaluation, while the $\hoca{}$
transformation does. Third, for large programs the transformation process of \tct{} $\hoca{}$ can
take long which results in the fact that either no or only little time is left for the analysis.
In the following we will investigate some of the programs 
in detail.

\begin{figure}[t]
  \footnotesize
  \centering
  \begin{tabular}{p{0.29\textwidth} | x{0.08\textwidth} x{0.08\textwidth} x{0.08\textwidth}
    x{0.08\textwidth} x{0.08\textwidth} x{0.08\textwidth} }
    Testbed: $\hoca{}$  & \(\bO(1)\) & \(\bO(n^{1})\) & \(\bO(n^{2})\) & \(\bO(n^{3})\) & \(\bO(n^{\geqslant4})\) & Fail\\
    \hline
\tct{} $\hoca{}$ with ARA    & 1 & 17 & 8 & 1 & 2 & 19\\
\tct{} $\hoca{}$ without ARA & 1 & 17 & 7 & 2 & 2 & 19\\
Standalone ARA CF (TRS)      & 1 & 18 & 5 & 0 & 0 & 24\\
Standalone ARA CF (HOCA)     & 1 & 17 & 5 & 0 & 0 & 25\\
Raml-1.4.1                   & 1 & 12 & 7 & 8 & 1 & 19\\
    \hline
\tct{} $\hoca{}$ with ARA    & 0.09 & 1.53 & 7.00 & 8.04 & 35.81 & 57.28\\
\tct{} $\hoca{}$ without ARA & 0.03 & 0.59 & 4.70 & 30.35 & 5.34 & 57.21\\
Standalone ARA CF (TRS)      & 0.10 & 0.38 & 49.02 & 0.00 & 0.00 & 46.02\\
Standalone ARA CF (HOCA)     & 0.07 & 0.78 & 39.50 & 0.00 & 0.00 & 46.66\\
Raml-1.4.1                   & 0.03 & 0.07 & 0.79 & 1.10 & 3.38 & 12.12\\
    \hline
  \end{tabular}
  \caption{Experimental evaluation of \tct{} $\hoca{}$~with ARA, \tct{} $\hoca{}$~without ARA and
    standalone ARA with cost-free (CF) derivations on the translated TRS (without type information)
    and directly on the $\OCaml$ code which calls the $\hoca{}$ transformation. The tools were
    evaluated on the $\hoca{}$ testbed for instances (top) and required time in seconds (bottom).}
  \label{fig:res-hoca}
\end{figure}

%
%

\paragraph*{\#3.42 -- Binary representation}

Given a number $n$ in unary encoding as input, the TRS computes the binary
representation $(n)_2$ by repeatedly halving $n$ and computing the last bit, see
the Appendix for the TRS.
The optimal runtime complexity of $\RSa$ is linear in $n$. For this, first observe
that the evaluation of $\half(\mS^m(\zero))$ and $\lastbit(\mS^m(\zero))$ requires
about $m$ steps in total. Secondly, $n$ is halved in each iteration and thus the
number of steps can be estimated by $\sum_{i=0}^k 2^i$, where
$k \defsym \len{(n)_2}$. As the geometric sum computes to $2 \cdot 2^k - 1$, the
claim follows.
Such a precise analysis is enabled by an amortised analysis, which takes the
sequence of subsequent function calls and their respective arguments into account.
Compared to former versions of \tct{} which reported $\bO(n^2)$ we find this
optimal linear bound of $\bO(n)$ when ARA is enabled. Furthermore, the best case
analysis of ARA shows that this bound is tight by returning $\Omega(n)$. Similarly
\aprove~\cite{Giesl:2017} yields the tight bound employing a size abstraction
to \emph{integer transition systems} (\emph{ITSs} for short), cf.~\cite{naaf2017complexity}.
The resulting ITSs are then solved with~\cofloco~\cite{florescofloco},
which also embodies an amortisation analysis.

\paragraph*{insertionsort.raml}
Insertionsort from the TPDB has quadratic runtime complexity. ARA with cost-free derivations enabled
is able to infer this bound. Similarly can \tct{} with ARA using the default setup, while when
looking for the best bound only it is unable to handle the trade off between execution time and
tightness of the bound and runs into a timeout. \aprove{} infers $\bO(n^{2})$ as well. This bound is
tight~\cite[p.158ff]{Hoffmann:2011}. The best case analysis finds a linear lower bound for this
implementation of insertionsort.
%


\paragraph*{tpa2 -- Multiple Subtraction} %
This TRS from the TPDB iterates subtraction until no more rules can be applied. The latest version
of \tct{} with ARA is in contrast to \tct{} without ARA able to solve the problem. The inferred
quadratic worst-case bound coincides with the bounds provided by \aprove{}.



\paragraph*{bfs.raml -- Depth/Breadth-First Search} This program, found in the TPDB, is a
translation of depth-first search (DFS) and breadth-first search (BFS) from \raml~syntax, see
Figure~\ref{lst:bfs} for the \raml~code. Note that the TRS uses strict rules for the equality check
which recurses on the given data structure. In DFS a binary tree is searched one branch after the
other for a matching entry while BFS uses two lists to keep track of nodes of a binary tree to be
visited. The first one is used to traverse on the nodes of the current depth, whereas the second
list collects all nodes of the next depth to visit. After each iteration the futurelist is reversed.
Further, note that BFS is called twice in the function \textsf{bfs2}. Standalone ARA and \tct{} with
ARA are the only tools which are able to infer a complexity bound of $\bO(n^{2})$.

\begin{figure}[tb!]
  \begin{lstlisting}[language=Caml]
  dfs (queue,x) = match queue with
     | [] -> leaf
     | (t::ts) -> match t with
                   | leaf -> dfs(ts,x)
                   | node(a,t1,t2) ->
                      if a == x then t else dfs(t1::t2::ts,x);
  dodfs (t,x) = dfs([t],x);
  bfs(queue,futurequeue,x) = match queue with
     | [] -> match futurequeue with
        | [] -> leaf
        | (t::ts) -> bfs(reverse(t::ts),[],x)
     | (t::ts) -> match t with
        | leaf -> bfs(ts,futurequeue,x)
        | node(y,t1,t2) -> if x==y then node(y,t1,t2)
                           else bfs(ts,t2::t1::futurequeue,x);
  dobfs(t,x) = bfs([t],[],x);
  bfs2(t,x) = let t' = dobfs(t,x) in dobfs(t',x);
\end{lstlisting}
    \caption{DFS and BFS in RaML Syntax~\cite[p.70]{Hoffmann:2011} of which the
      translation to an TRS can be found in the TPDB.}
    \label{lst:bfs}
\end{figure}


\section{Conclusion}
\label{Conclusion}

In this paper we have established a novel automated amortised cost analysis for
term rewriting. In doing so we have not only implemented the methods detailed in
earlier work~\cite{HM:2014}, but also generalised the theoretical basis
considerably. We have provided a prototype implementation and integrated into~\tct.

More precisely, we have extended the method of amortised resource analysis to \emph{unrestricted}
term rewrite systems, thus overcoming typical restrictions of functional programs
like left-linearity, pattern based, non-ambiguity, etc.
This extension is non-trivial and generalises earlier results in the literature.
Furthermore, we have lifted the method to relative rewriting.
The latter is the prerequisite to a \emph{modular} resource analysis,
which we have provided through the integration into~\tct.
The provided integration of amortised resource analysis into~\tct\ has led to
an increase in overall strength of the tool (in comparison to the latest version
without ARA and the current version of~\aprove). Furthermore in a significant
amount of cases we could find better bounds than before.

In future work we want to focus on lifting the provided amortised analysis in two
ways. First we want to extend the provided univariate analysis to a multivariate
analysis akin the analysis provided in~\raml. The theoretical foundation for this
has already been provided by Hofmann et al.~\cite{HM:2015}. However efficient
automation of the method proposed in~\cite{HM:2015} requires some sophistication.
Secondly, we aim to overcome the restriction to constant amortised analysis and
provide an automated (or at least automatable) method establishing logarithmic
amortised analysis. This aims at closing the significant gap of existing methods in
contrast to the origin of amortised analysis~\cite{ST:1983,Tarjan:1985}, compare
also~\cite{Nipkow:2015}.

\bibliographystyle{elsarticle-num}
\bibliography{references}

\end{document}